\pdfoutput=1
\documentclass[acmsmall,screen,nonacm]{acmart-primary/acmart} 

\usepackage{amsmath,amstext}

\usepackage{amssymb}
\usepackage{proof}
\usepackage{wrapfig}
\usepackage{framed}
\usepackage{tikz}
\usetikzlibrary{arrows,automata,backgrounds,calc,chains,fadings,folding,decorations.fractals,fit,patterns,positioning,mindmap,shadows,shapes.geometric,shapes.symbols,through,trees,decorations.markings,decorations.text}
\colorlet{dblue}{blue!40!black}

\usetikzlibrary{decorations.pathmorphing}
\usepackage{float}
\usepackage{stackrel}
\usepackage{enumitem}

\colorlet{lightblue}{blue!50!white}

\newcommand{\size}[1]{\#(#1)}
\newcommand{\trap}[1]{\mathrm{trap}(#1)}
\newcommand{\cfc}[1]{\mathrm{cfc}(#1)}
\newcommand{\srel}{\rightsquigarrow}
\newcommand{\DCR}{\text{DCR}}
\newcommand{\cf}[1]{\mathrm{cf}(#1)}
\newcommand{\substep}{\rightsquigarrow}
\newcommand{\lsubstep}{\leftsquigarrow}
\providecommand{\leftsquigarrow}{\mathrel{\reflectbox{$\rightsquigarrow$}}}
\newcommand{\junction}[3]{\xi_{#1}(#2,#3)}

\newcommand{\hub}[1]{\mathsf{hub}(#1)}
\newcommand{\branch}[2]{\mathsf{branch}(#1,#2)}
\newcommand{\trees}[1]{\mathsf{trees}(#1)}
\newcommand{\leaf}[1]{\mathsf{leaf}(#1)}
\newcommand{\node}[2]{\mathsf{node}(\set{#1,#2})}
\newcommand{\sjoins}{\mathsf{joins}}
\newcommand{\joins}[1]{\sjoins(#1)}
\newcommand{\pmap}{\rightharpoonup}
\newcommand{\adet}{\hookrightarrow}
\newcommand{\ladet}{\hookleftarrow}

\newcommand{\nat}{\mathbb{N}}

\newcommand{\funap}[2]{#1(#2)}

\colorlet{corange}{orange!70!red}
\colorlet{cred}{red}
\colorlet{cdarkred}{red!70!black}
\colorlet{clightred}{red!70}
\colorlet{cpurple}{blue!50}
\definecolor{cblue}{rgb}{0,0.4,0.7}
\definecolor{clightblue}{rgb}{0,0.6,1.0}
\colorlet{cgreen}{green!60!black}
\colorlet{cdarkgreen}{green!60!black}

\newcommand{\step}{\rightarrow}
\newcommand{\lstep}{\leftarrow}
\newcommand{\steps}{\mathrel{\wash{\rightarrow}\rightarrow}}
\newcommand{\lsteps}{\mathrel{\wash{\leftarrow}\leftarrow}}
\newcommand{\cstep}{\leftrightarrow}

\catcode`\@=11
\newcommand{\wash}{\mathpalette\mywash}
\newcommand{\mywash}[2]{\setbox0=\hbox{$\m@th#1{#2}$}\wd0=0pt\box0}
\catcode`\@=12

\newcommand{\sdom}{\mit{dom}}
\newcommand{\dom}{\funap{\sdom}}

\newcommand{\set}[1]{\{\, #1 \,\}}
\newcommand{\pair}[2]{\langle #1,#2 \rangle}

\newcommand{\family}[1]{\{ #1 \}}

\newcommand{\prop}{\mathrm}

\newcommand{\CR}{\ensuremath{\prop{CR}}}

\newcommand{\height}[2]{\lfloor #2 \rfloor_{#1}}

\tikzset{ao/.style={at=(#1.north),anchor=south,inner sep=0.5mm}}
\tikzset{bo/.style={at=(#1.south),anchor=north,inner sep=0.5mm}}
\tikzset{ro/.style={at=(#1.east),anchor=west,inner sep=0.5mm}}
\tikzset{lo/.style={at=(#1.west),anchor=east,inner sep=0.5mm}}

\tikzset{alo/.style={at=(#1.north west),anchor=south east,inner sep=0.5mm}}
\tikzset{blo/.style={at=(#1.south west),anchor=north east,inner sep=0.5mm}}
\tikzset{aro/.style={at=(#1.north east),anchor=south west,inner sep=0.5mm}}
\tikzset{bro/.style={at=(#1.south east),anchor=north west,inner sep=0.5mm}}

\tikzset{tloop/.style={out=60,in=120,looseness=5}}
\tikzset{bloop/.style={out=-60,in=-120,looseness=5}}
\tikzset{lloop/.style={out=210,in=150,looseness=5}}
\tikzset{rloop/.style={out=-30,in=30,looseness=5}}

\input{tikz_settings.xet}
\newcommand{\mit}{\mathit}

\begin{document}

\theoremstyle{acmplain}
\newtheorem{open}[theorem]{Open Problem}
\newtheorem{question}[theorem]{Question}
\theoremstyle{acmdefinition}
\newtheorem{remark}[theorem]{Remark}
\newtheorem{notation}[theorem]{Notation}
\newtheorem{convention}[theorem]{Convention}

\title{Decreasing Diagrams are Complete for Confluence}

\author{J\"org Endrullis}
\affiliation{%
  \institution{Vrije Universiteit Amsterdam}
  \city{Amsterdam}
  \country{The Netherlands}}
\email{j.endrullis@vu.nl}

\author{Ievgen Ivanov}
\affiliation{%
  \institution{Taras Shevchenko National University of Kyiv}
  \city{Kyiv}
  \country{Ukraine}}
\email{ivanov.eugen@gmail.com}

\author{Femke van Raamsdonk}
\affiliation{%
  \institution{Vrije Universiteit Amsterdam}
  \city{Amsterdam}
  \country{The Netherlands}}
\email{f.van.raamsdonk@vu.nl}

\renewcommand{\shortauthors}{Endrullis, Ivanov, and van Raamsdonk}

\begin{abstract}
  \emph{Confluence} is a fundamental property of nondeterministic computations,
arising from parallelism, concurrency, or freedom in the evaluation order.
It guarantees that such a computation always yields the same result, regardless of the order in which steps are taken.
The \emph{decreasing diagrams} technique of van Oostrom is one of the most versatile methods 
for establishing confluence of \emph{transition systems} (\emph{abstract rewriting systems}).
It reduces global confluence to local confluence: 
a system is confluent whenever its transitions admit a \emph{locally decreasing labeling}.  
Essentially all classical confluence criteria arise as corollaries.

A central question, posed by van~Oostrom in 1993, asks whether the decreasing diagrams technique is \emph{complete}: 
Does every confluent transition system admit a locally decreasing labeling?
This is Problem~56 of the \emph{RTA List of Open Problems}.
A positive answer was known only for countable systems, and recently up to the first uncountable cardinal~$\aleph_1$. 
The general case remained open.

We settle this thirty-three-year-old problem in full.
We prove that \emph{every} confluent transition system admits a locally decreasing labeling using only \emph{three} labels.
This bound is optimal, as two labels do not suffice even at the first uncountable cardinality.
It follows that this single criterion can, in principle, certify the confluence of \emph{every} confluent system, and hence of every confluent program.
The entire development is machine-checked in the Isabelle/HOL and Lean proof assistants
and relies only on classical logic and the axiom of choice.

\end{abstract}

\keywords{confluence, decreasing diagrams, Church--Rosser, abstract rewriting}

\ccsdesc[500]{Theory of computation~Equational logic and rewriting}

\maketitle

\begin{anonsuppress}
\begin{center}
  \emph{Dedicated to the memory of Jan Willem Klop (1945--2025),
  dear friend and colleague, and a founder of the theory of term rewriting,
  for whom the completeness of decreasing diagrams was a favourite open
  problem.}
\end{center}
\end{anonsuppress}

\section{Introduction}\label{sec:intro}
\emph{Confluence}, also known as the \emph{Church--Rosser} property, is of central importance to the theory of computation.
A transition system~$\pair{S}{{\to}}$, where ${\to}$ is a binary relation on a set of states $S$, is \emph{confluent} if
two finite computations from a common state can always be brought back together:
${\lsteps}\cdot{\steps} \subseteq {\steps}\cdot{\lsteps}$, or equivalently
\begin{center}
  \begin{tikzpicture}[default,thick,baseline=0ex,every node/.style={rectangle},inner sep=0.5mm,node distance=15mm]
    \node (a) {$a$};
    \node (b) [below left of=a] {$b$};
    \node (c) [below right of=a] {$c$};
    \node (d) [below left of=c] {$d$};

    \draw [->>] (a) to (b);
    \draw [->>] (a) to (c);
    \draw [->>,densely dotted] (b) to (d);
    \draw [->>,densely dotted] (c) to (d);
  \end{tikzpicture}
\end{center}
A confluent nondeterministic system is deterministic \emph{in its results}:
despite any possible parallelism, concurrency, freedom in the evaluation order, or nondeterministic choices in such a system, all terminating computations that begin in the same state yield the same outcome.

Confluence is load-bearing in the meta-theory behind programming languages and verification.
The Church--Rosser theorem for the $\lambda$-calculus underpins the determinacy of the evaluation of functional programs,
and the consistency of equational reasoning about them~\cite{church:rosser:1936,bare:1984}, 
justifies lazy and optimal evaluation strategies~\cite{maranget:1991,ariola:etal:1995}, 
and makes type checking decidable in dependently typed languages~\cite{sozeau:etal:2020}.
Beyond the $\lambda$-calculus, confluence separates genuine nondeterminism from mere interleaving~\cite{meseguer:1992,rosu:serbanuta:2010}, 
governs the convergence of replicated data~\cite{shapiro:etal:2011,burckhardt:etal:2014} 
and state-space reduction in model checking~\cite{blom:pol:2002}, 
and is a correctness criterion for constraint solvers~\cite{abdennadher:etal:1999} and
Knuth--Bendix completion~\cite{knuth:bendix:1970}.

Concurrent and distributed programs are a prime source of confluence questions, 
since a program induces a transition system on its states whose confluence can be subtle. 
As a small example, take two processes computing greatest common divisors over shared variables:
\begin{align*}
  \mathrm{GCD}(A,B) \parallel \mathrm{GCD}(B,C) \;,
\end{align*}
each process repeatedly comparing its two operands and replacing the larger by their difference.
Here $\parallel$ denotes parallel composition under an \emph{interleaving} semantics:
a computation is any interleaving of the steps of the two processes, which interfere through the shared variable~$B$.
Whether this system is confluent is \emph{subtle}, and depends on the granularity of its steps.
If each comparison-and-update is one \emph{atomic} step --- for instance, guarded by a lock on the two operands --- 
then the system is confluent with the unique result $A=B=C=\gcd(A,B,C)$.
If instead the comparison and the assignment are \emph{two separate} steps, the system is \emph{not} confluent
as a consequence of lost updates through stale reads.
The outcome then depends on the schedule.
Confluence is thus a subtle property, and reasoning about it soundly calls for rigorous, formal methods.

Confluence is most elegantly studied at the abstract level of transition systems, 
also called \emph{abstract rewriting systems}~\cite{tere:2003,klop:1992}; 
criteria proved there apply uniformly to every concrete system that instantiates them.
Henrio et al.~\cite{henrio:etal:2026} have recently demonstrated that abstract confluence methods
--- in particular van~Oostrom's decreasing diagrams and de~Bruijn's weak diamond property~\cite{brui:1978} --- 
can be employed fruitfully to prove confluence of \emph{actual} programs:
they mechanize de~Bruijn's weak diamond method in the Rocq proof assistant and use it to establish confluence of programs in an actor language.
Decreasing diagrams are equally central to the automated confluence analysis of rewriting systems, 
where they underlie the rule-labeling criteria of provers such as \textsf{CSI}~\cite{zank:felg:midd:2011:tool};
confluence is, indeed, a highly active research area, 
with an annual International Workshop on Confluence and \emph{Confluence Competition}~\cite{midd:nage:shin:2019}.
These are precisely the methods studied in the present paper, and completeness of the method means that, in principle,
this single technique can certify the confluence of \emph{every confluent program}.

Because confluence is undecidable for many classes of systems, a rich arsenal of \emph{sufficient criteria} for confluence has been developed:
Newman's lemma~\cite{newm:1942},
the Hindley--Rosen lemma~\cite{hind:1964,rose:1973},
Huet's strong confluence~\cite{huet:1980},
and many others~\cite{tere:2003,klop:1992}.
The most versatile of these are de~Bruijn's weak diamond property~\cite{brui:1978}
and van~Oostrom's \emph{decreasing diagrams} technique~\cite{oost:1994b,oost:1994}.
Both infer global confluence from local peaks alone, by labeling the rewrite steps with elements of a well-founded order
and constraining, through these labels, how each peak may be joined.
The two criteria are in fact equal in strength~\cite{endr:klop:2013}, yet decreasing diagrams are often more convenient to apply:
they replace de~Bruijn's \emph{total} order and asymmetric peak condition with a \emph{partial} order and a single \emph{symmetric} diagram (Figure~\ref{fig:decreasing:diagram}).\footnote{%
  For instance, Hirokawa, Middeldorp, Sternagel and Winkler~\cite[Corollary~2.17]{hiro:midd:ster:wink:2019}
  prove the critical pair criterion for terminating rewrite systems based on prime critical pairs
  by labeling each step with its source term and ordering the labels by the rewrite relation~$\to^+$ itself,
  a well-founded but non-total order; de~Bruijn's criterion would first require linearising this order.
}
A system whose confluence can, in principle, be established in this way is called \emph{decreasing Church--Rosser}~\cite{oost:1994b,oost:1994}.
The power of the method is striking: essentially all of the classical criteria above arise as instances by a suitable choice of labels.
This makes the defining question of the method unavoidable: is it \emph{complete}?

\begin{quote}
  \emph{Does the Church--Rosser property imply the decreasing Church--Rosser property?}
\end{quote}

\noindent
This is Problem~56 of the RTA List of Open Problems, posed by van~Oostrom in 1993~\cite{rtaloop56,dershowitz:jouannaud:klop:1991}: 
can confluence always be \emph{localized} --- witnessed by a decreasing labeling of the one-step relation?
In other words: is every confluent system provably confluent by this single technique?

The problem was known to have a positive answer for all \emph{countable} systems~\cite{oost:1994b}; 
Endrullis, Klop, and Overbeek later sharpened the countable case, showing that just \emph{two} labels suffice there~\cite{endr:klop:over:2020}.
The countable barrier was first crossed by Ivanov~\cite{ivanov:2025}, 
who proved that every confluent system of cardinality at most $\aleph_1$ 
is decreasing Church--Rosser --- in fact, with three labels. 
He also showed that two labels do not suffice for all confluent systems of cardinality $\aleph_1$, 
so three labels are optimal at this cardinality.
Conditional completeness of decreasing diagrams under additional assumptions 
(e.g., weak normalization) was studied in~\cite{oost:1994,endr:klop:over:2018}.
The general question of unconditional completeness of decreasing diagrams remained open.

Uncountable systems are not a curiosity.
Although the programs of interest are typically countable, confluence questions arise just as naturally for genuinely uncountable systems
--- rewriting over the reals, hybrid and cyber-physical systems, and infinitary or probabilistic semantics.
These are not hypothetical: a hybrid system has an uncountable continuous state space and yields a confluent,
uncountable ARS of the kind studied in verification (Remark~\ref{rem:uncountable:ars}).
Their cardinalities readily exceed~$\aleph_1$: the reals have cardinality~$2^{\aleph_0}$, and collections of functions over them are larger still.
Here set theory sets the terms of the problem.
Ivanov's bound~\cite{ivanov:2025} reaches the continuum only under the Continuum Hypothesis~\cite{smul:1996}; without it, $2^{\aleph_0}$ may lie \emph{arbitrarily} far up the hierarchy of alephs (Easton's theorem~\cite{easton:1970}).
An unconditional treatment therefore requires completeness at \emph{every} cardinality.
Such systems arise readily: a sensor logger that records real-valued readings (see Example~\ref{ex:sensor:logger})
yields a confluent ARS of cardinality~$2^{\aleph_0}$ without a cofinal rewrite sequence,
for which completeness was previously known only under the Continuum Hypothesis;
if the readings are sets of reals, its cardinality $2^{2^{\aleph_0}}$ exceeds $\aleph_1$ outright.
\\[-1ex]

\noindent
\textbf{Contributions.}
We resolve Problem~56~\cite{rtaloop56,dershowitz:jouannaud:klop:1991} in full by proving that \emph{every} confluent abstract rewrite system is decreasing
Church--Rosser, with \emph{no cardinality restriction}.
Moreover, we show that \emph{three labels always suffice} (Theorem~\ref{thm:main}), 
and this is optimal, since two labels do not suffice~\cite{ivanov:2025}.
We thereby also establish the \emph{completeness of the weak diamond property} for confluence,
since de~Bruijn's weak diamond property and decreasing diagrams are equivalent.

To the best of our knowledge, these are the first unconditional completeness results for methods of proving confluence.
This outcome was not foregone: van~Oostrom conjectured that decreasing diagrams are \emph{incomplete}~\cite[Conjecture~2.3.31]{oost:1994},
and related questions about uncountable directed sets are independent of ZFC~\cite{todo:1985}.
Moreover, the proof reveals hidden structure in every connected confluent system:
a cofinal, locally almost deterministic subsystem, generalizing Klop's cofinal reductions~\cite{klop:1980} beyond the countable case.
\\[-1ex]

\noindent
\textbf{Outline.}
Section~\ref{sec:preliminaries} fixes notation and recalls the decreasing diagrams method.
Section~\ref{sec:tools} introduces the key tools employed throughout the paper.
Section~\ref{sec:regular} develops the \emph{fan} construction, 
which settles the question for every system whose cofinality cardinal $\cfc{\to}$ is regular (more generally, for all almost regular systems).
Section~\ref{sec:fans} develops the \emph{double fan} construction for the remaining case 
where $\cfc{\to}$ is singular.
Together with the two-label countable case (Lemma~\ref{lem:countable}), 
this assembles the full completeness result in Section~\ref{sec:singular:results}; Section~\ref{sec:conclusion} concludes with directions for future work.
\\[-1ex]

\noindent
\textbf{Formalization.}
To eliminate any doubt about a delicate transfinite argument, 
the entire development is independently machine-checked in \emph{two} proof assistants, Isabelle/HOL and Lean~\cite{moura:ullrich:2021};
both formalizations are archived on Zenodo~\cite{endr:ivan:raam:2026:zenodo}.
Both proofs use only the standard axioms of classical mathematics.\footnote{
  The proof in Lean uses propositional extensionality, the axiom of choice, and quotient soundness~\cite{moura:ullrich:2021}.
  The proof in Isabelle/HOL uses HOL axioms~\cite{nipk:wenz:paul:2002}.
}
The mathematical content --- the definitions, constructions, and proofs --- is the authors' own;
the formalizations were developed with the assistance of Anthropic's Claude, under the
authors' direction and review. The authors take full responsibility for the correctness and
presentation of this work.

By the proof-theoretic conservativity of Isabelle/HOL~\cite{kuncar2018safety}, 
the theorem \textit{dcr\_three\_\allowbreak of\_\allowbreak confluent} of \textit{Singular.thy}, with its definitions unfolded, 
is already provable in Pitts' standard HOL~\cite{pitts1993hol}, and hence our completeness result holds in ZFC.
A Lean proof alone would not give this, as Lean's type theory needs inaccessible cardinals to be interpreted in ZFC~\cite{carneiro:2024}.

\section{Preliminaries}\label{sec:preliminaries}
We recall some central definitions from abstract rewriting and the decreasing diagrams technique.
In the paper we work in ZFC~\cite{frae:bar:levy:1973}; in particular, we freely use the Axiom of Choice and standard facts about well-orderings, ordinals, and cardinals.

We use standard notation for cardinals.
For a cardinal $\kappa$, we write $\kappa^+$ for its \emph{successor cardinal}, the least cardinal greater than~$\kappa$.
For an infinite cardinal $\kappa$, its \emph{cofinality} $\cf{\kappa}$ is the least cardinal $\lambda$
such that $\kappa$ is the union of $\lambda$ sets, each of cardinality $< \kappa$.
This set-theoretic cofinality $\cf{\kappa}$ of a cardinal should not be confused with
the cofinality $\cfc{\to}$ of an ARS $\pair{A}{{\to}}$, introduced in Definition~\ref{def:cofinal:subset}.
An infinite cardinal $\kappa$ is \emph{regular} if $\cf{\kappa} = \kappa$, and \emph{singular} otherwise.
Every infinite successor cardinal is regular.
Hence every singular cardinal $\kappa$ is an uncountable limit cardinal, that is, $\beta < \kappa$ implies $\beta^+ < \kappa$.

Let $A$ be a set.
For a relation ${\to} \subseteq A\times A$ we write
\begin{itemize}
  \item $\to^*$ or $\steps$ for its reflexive transitive closure, and $\to^+$ for its transitive closure;
  \item $\lstep$ for the inverse of $\step$;
  \item $\cstep$ for the symmetric closure of $\step$;
  \item $\equiv$ to denote the identity relation, that is, ${\equiv} = \{(a,a) \mid a \in A\}$;
  \item ${\to^\equiv}$ for the reflexive closure ${\to \cup \equiv}$ of $\to$;
\end{itemize}
For a set $B \subseteq A$, we moreover define ${\step}(B) = \set{c \mid b \in B,\; b \step c }$, and we write $\size{B}$ for the cardinality of~$B$.
If $a \steps b$, then $b$ is a \emph{reduct} of $a$, and we say that $a$ \emph{reaches}~$b$.

\begin{definition}[Abstract Rewriting System]
  An \emph{abstract rewriting system (ARS)} $\pair{A}{{\to}}$ consists of a non-empty set~$A$
  and a binary relation ${\to} \subseteq A \times A$.
  An ARS $\pair{B}{{\substep}}$ is called a \emph{sub-ARS} (or \emph{subsystem}) of $\pair{A}{{\to}}$,
  denoted $\pair{B}{{\substep}} \subseteq \pair{A}{{\step}}$,
  if $B \subseteq A$ and ${\substep} \subseteq {\step}$.
\end{definition}

\noindent
Abstract rewriting systems are sometimes defined more generally, as a set $A$ of objects together
with a set of \emph{steps} and two functions assigning to each step a \emph{source} and a
\emph{target} in $A$.  In contrast to a relation, this permits several distinct \emph{parallel}
steps with the same source and target.  All results of this paper carry over to this more general
setting: it suffices to assign the same label to all parallel steps (steps sharing source and target),
which reduces the general case to the relational one.
In the remainder of the paper, we therefore work with ARSs whose steps form a relation ${\to} \subseteq A \times A$.

\begin{definition}[Rewrite sequence]
  A \emph{rewrite sequence} in $\pair{A}{{\to}}$ is a finite or infinite sequence
  $a_0 \to a_1 \to a_2 \to \ldots$ of elements of $A$ with $a_i \to a_{i+1}$ for every consecutive pair.
  It is called \emph{simple} if it visits no element twice ($i \ne j$ implies $a_i \ne a_j$).
\end{definition}

\noindent
An element $a \in A$ is a \emph{normal form} if there is no $b \in A$ with $a \to b$.

\begin{remark}\label{rem:uncountable:ars}
  Uncountable ARS arise naturally from hybrid (discrete--continuous) systems~\cite{goeb:2009}.
  With any such system $H$ we associate an ARS $\pair{S}{{\to}}$ whose states $S$ are the continuous states of $H$,
  and where $s_1 \to s_2$ holds whenever $H$ reaches $s_2$ from $s_1$ by continuous evolution within one discrete mode,
  or by a single (possibly jump-inducing) mode change.
  Since the continuous state space is real-valued, $S$ is typically uncountable.

  For instance, a bouncing ball~\cite[Figure 4]{ivanov:2023} has state $(y,v) \in \mathbb{R}^2$
  ($y$ the height above ground, $v$ the velocity) and two modes: ``in motion'', governed by
  $\frac{d}{dt}(y,v) = (v,-g)$ while $y > 0$ with gravity $g$, and ``stopped'' at $y = v = 0$;
  each floor impact ($y \to 0$) resets the velocity to $-cv$ for a fixed restitution coefficient~$c$.
  This standard model exhibits Zeno behavior~\cite{goeb:2009} --- infinitely many bounces in finite time ---
  and never reaches the ``stopped'' mode, so it fails to capture a ball that comes to rest.
  Following~\cite{lee:2016}, one repairs this with nondeterminism: the ``in motion'' ball may switch
  to ``stopped'' once $|y|$ and $|v|$ are small enough, making that mode reachable.
  The resulting ARS $\pair{S}{{\to}}$ is confluent (the ``stopped'' mode is always reachable),
  nondeterministic, non-terminating, and uncountable.
  Here confluence means that the outcome does not depend on the nondeterministic timing of the mode switch:
  whichever switch is taken, the ball comes to rest in the same state.
  As this resting state is reachable from every state, the system is a trivial instance of our results (see Convention~\ref{conv:components});
  the non-trivial cases are illustrated by the running Example~\ref{ex:successor} and by Examples~\ref{ex:finite:reals} and~\ref{ex:sensor:logger}.
  Further examples, combining the nondeterminism of physical models with that of control software in
  cyber-physical systems, follow from~\cite{lee:2016,goeb:2009,plat:2010}.
\end{remark}

\begin{definition}[Confluence]
  An ARS $(A,\to)$ is \emph{confluent (\CR)} if ${\lsteps \cdot \steps} \subseteq {\steps \cdot \lsteps}$,
  that is, every pair of finite, coinitial rewrite sequences can be joined to a common reduct. 
\end{definition}

\begin{definition}[Deterministic]\label{def:deterministic}
  Let $\pair{A}{{\step}}$ be an ARS.
  An element $a \in A$ is called \emph{deterministic} if, for all $b,c$ with $a \step b$ and $a \step c$, we have $b = c$.
  We say that a step $a \to b$ is \emph{deterministic} if its source $a$ is deterministic,
  and we call a rewrite sequence \emph{deterministic} if all steps in the sequence are deterministic.
  The ARS $\pair{A}{{\step}}$ is called \emph{deterministic} if all its elements are deterministic.
\end{definition}

\begin{definition}[Almost deterministic]\label{def:almost:deterministic}
  Let $\pair{A}{{\step}}$ be an ARS.
  A rewrite sequence is \emph{almost deterministic} if it contains at most one non-deterministic step.
  The ARS $\pair{A}{{\step}}$ is
  \begin{enumerate}
    \item 
      \emph{locally almost deterministic} if whenever $a \lstep \cdot \step b$
      there exist almost deterministic reductions $a \steps c$ and $b \steps c$ to some $c \in A$, and
    \item 
      \emph{globally almost deterministic} if whenever $a \cstep^* b$
      there exist almost deterministic reductions $a \steps c$ and $b \steps c$ to some $c \in A$.
  \end{enumerate}
\end{definition}

\noindent
Intuitively, an almost deterministic reduction makes at most one genuine choice; 
this is exactly what a two-label decreasing diagram can absorb (see Lemma~\ref{lem:almost:deterministic}).

\begin{remark} 
  The concepts of globally/locally almost deterministic ARS are analogous to, 
  but somewhat less restrictive than the concept of a 1-branching one-step nondeterministic 
  (or, alternatively, almost deterministic) Church-Rosser strategy considered in~\cite{ivanov:2026a}, 
  where it was shown that every confluent ARS with reduction relation of the cardinality $\aleph_1$ has a 1-branching 1-nCR strategy. 
\end{remark}

\begin{definition}[Connected component]\label{def:connected:component}
  Two elements $a,b \in A$ are \emph{connected} if $a \cstep^* b$.
  A set $C \subseteq A$ is \emph{connected} if, for every $a,b \in C$, we have $a \cstep^* b$.
  Moreover, $C$ is called a \emph{connected component} of $\pair{A}{{\step}}$ if it is a maximal connected subset of $A$.
\end{definition}

\begin{definition}[Cofinal subset]\label{def:cofinal:subset}
  A set $C \subseteq A$ is a \emph{cofinal subset} of $\pair{A}{{\to}}$
  if, for every $a \in A$, there exists $c \in C$ such that $a \to^* c$.
  We define
  \begin{align}
    \cfc{\to} = \min \set{ \size{C} \mid \text{$C$ is a cofinal subset of $\pair{A}{{\to}}$}}\;,
  \end{align}
  that is, the minimal cardinality of any cofinal subset of $\pair{A}{{\to}}$.
  A subsystem $\pair{B}{{\substep}}$ of $\pair{A}{{\to}}$ is \emph{cofinal} if its carrier $B$ is a cofinal subset of $\pair{A}{{\to}}$.
\end{definition}

\begin{example}[Successor relation]\label{ex:successor}
  Consider $\pair{\nat}{{\to}}$ with $n \to n+1$ for all $n \in \nat$;
  this ARS is connected and confluent (even deterministic).
  A set $C \subseteq \nat$ is cofinal if and only if it is infinite, so $\cfc{\to} = \aleph_0$.
\end{example}

\begin{remark}\label{rem:finite:cfc}
  If $\pair{A}{{\step}}$ is confluent and connected, and $\cfc{\step}$ is finite, then $\cfc{\step} = 1$.
  Indeed, let $C = \set{c_1,\ldots,c_n}$ be a finite cofinal subset.
  Any two elements of $C$ are connected and hence, by confluence, have a common reduct;
  by induction on $n$, all elements of $C$ have a common reduct~$d$.
  Then $\set{d}$ is cofinal, and hence $\cfc{\step} = 1$.
  Without confluence this fails:
  the ARS with steps $a \to b$, $a \to c$, $c \to d$, $c \to e$ is connected,
  but its three normal forms $b, d, e$ must belong to every cofinal subset, so $\cfc{\step} = 3$.
\end{remark}

\begin{notation}\label{not:family}
  For a family $\family{\to_\alpha}_{\alpha\in I}$ of relations $\to_\alpha \subseteq A \times A$ 
  indexed by a well-founded partial order $\pair{I}{{<}}$, we define
  \begin{align*}
    {\to} \;&=\; \textstyle{\bigcup_{\alpha\in I} \to_\alpha} &
    {\to_{< \beta}} \;&=\; \textstyle{\bigcup_{\alpha < \beta} \to_\alpha} &
    {\to_{\le \beta}} \;&=\; \textstyle{\bigcup_{\alpha \le \beta} \to_\alpha}  
  \end{align*}
  Moreover, we use $\to_{< \alpha \cup < \beta}$ as shorthand for $(\to_{< \alpha} \cup \to_{< \beta})$.
\end{notation}

The \emph{decreasing diagrams} method of van~Oostrom~\cite{oost:1994b}
derives confluence from a well-founded labeling of the rewrite steps,
by requiring that every \emph{peak} can be completed into a \emph{decreasing} elementary diagram
as shown in Figure~\ref{fig:decreasing:diagram}:
the peak $c \lstep_\beta a \to_\alpha b$ is joined at a common reduct~$d$ by
$b \to_{<\alpha}^{*} \cdot \to_{\beta}^{\equiv} \cdot \to_{<\alpha\,\cup\,<\beta}^{*} d$ and
$c \to_{<\beta}^{*} \cdot \to_{\alpha}^{\equiv} \cdot \to_{<\alpha\,\cup\,<\beta}^{*} d$,
where $\to_{<\alpha}$ collects the steps whose label is strictly below~$\alpha$ (in the partial order),
and $\to_{\gamma}^{\equiv}$ denotes at most one step of label~$\gamma$.

\begin{figure}[h!]
  \centering
  \includegraphics{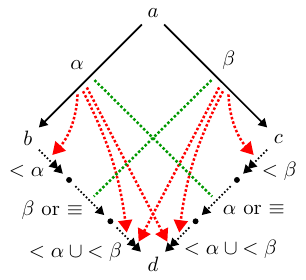}
  \Description{Decreasing elementary diagram.}
  \vspace{-1.5ex}
  \caption{Decreasing elementary diagram.
    The inner dotted arrows are just a visual aid:
    the dotted green \emph{lines} stand for weak decrease ($\ge$),
    the dotted red \emph{arrows} for strict decrease ($>$).
    Furthermore, multiple incoming arrows represent choice, thus weakening the requirements.
  }
  \label{fig:decreasing:diagram}
\end{figure}

\begin{definition}[Decreasing Church--Rosser]\label{def:decreasing:diagrams}
  An ARS $\pair{A}{{\to}}$ is \emph{decreasing Church--Rosser (\DCR) over a well-founded partial order $\pair{I}{{<}}$}
  if ${\to} = \bigcup_{\alpha\in I} {\to_\alpha}$ for some family $\family{\to_\alpha}_{\alpha\in I}$ of relations $\to_\alpha \subseteq A \times A$
  (see Notation~\ref{not:family}) such that
  every (local) peak $c \lstep_\beta a \to_\alpha b$ can be joined
  by rewrite sequences of the form shown in Figure~\ref{fig:decreasing:diagram}.\footnote{%
    Van Oostrom~\cite{oost:2008} generalizes the shape of the decreasing elementary diagrams by allowing the joining reductions to be conversions.
    This can be helpful to find suitable elementary diagrams.
    Since reductions are special conversions, our completeness result for the version with reductions also yields completeness for the version with conversions.
  }
  Such a family $\family{\to_\alpha}_{\alpha\in I}$ is called a \emph{locally decreasing labeling} of $\to$.
  For an ordinal $\gamma$, we write $\DCR_\gamma$ for decreasing Church--Rosser over $\pair{\set{\beta \mid \beta < \gamma}}{<}$;
  in particular, $\DCR_n$ for $n \in \nat$ means that the $n$ labels $0, \ldots, n-1$ suffice.
\end{definition}

The following is the main theorem of decreasing diagrams~\cite{oost:1994b}.
\begin{theorem}[Soundness of Decreasing Diagrams]\label{thm:decreasing:diagrams}
  If an ARS is decreasing Church--Rosser,
  then it is confluent. \qed
\end{theorem}
\noindent
In other words $\DCR \implies \CR$.

\section{Key concepts and tools}\label{sec:tools}

We begin by sketching the overall strategy.
Since a locally decreasing labeling can be assembled component by component, we may restrict attention to a single connected, confluent ARS.
Within a component, our goal is to carve out a \emph{cofinal} subsystem $\pair{S}{{\substep}}$ that is \emph{locally almost deterministic} (Definition~\ref{def:almost:deterministic}):
every local peak is joined by two reductions, each making at most one nondeterministic choice.
Such a subsystem is decreasing with only \emph{two} labels (Lemma~\ref{lem:almost:deterministic}),
and a decreasing labeling of a cofinal subsystem extends to the whole system at the cost of a single extra label (Lemma~\ref{lem:extend}) ---
yielding the \emph{three} labels of our main theorem, which is optimal.
The entire difficulty thus lies in constructing such a cofinal, locally almost deterministic subsystem, and this is what the tools of this section serve.
The notions of \emph{trap}, \emph{escape}, and \emph{trap degree} measure how large a subsystem a point can always escape.
In Section~\ref{sec:regular}, we employ \emph{fans} that are distinguished points of maximal escape ability;
they act as waypoints for joining the subsystem together as it grows.

We now introduce these tools in detail.

\begin{convention}[Connected components]\label{conv:components}
  For finding a locally decreasing labeling, it suffices to consider each connected component separately:
  every local peak lies within a single component,
  so locally decreasing labelings of the components together form a locally decreasing labeling of the whole system.
  Moreover, an ARS is confluent if and only if all its connected components are.
  Therefore, in Sections~\ref{sec:tools}--\ref{sec:fans}, we mainly consider ARSs $\pair{A}{{\to}}$ that are
  \emph{connected} and \emph{confluent}, and satisfy $\cfc{\to} > 1$ (see Definition~\ref{def:cofinal:subset}).
  The excluded case $\cfc{\to} = 1$ is trivial: then some element $m$ is reachable from every element,
  and two labels suffice (Lemma~\ref{lem:countable}).
  Under these assumptions, there are no normal forms:
  a normal form would be reachable from every element, and thus form a cofinal singleton.
\end{convention}

\subsection{Cofinal subsystems}

The concept of cofinal subsets plays a prominent role in our constructions.
In Sections~\ref{sec:regular} and~\ref{sec:fans}, 
we will construct a connected, confluent subsystem $\pair{S}{{\substep}}$
whose carrier $S$ is a cofinal subset of~$\pair{A}{{\step}}$.
A locally decreasing labeling of such a subsystem extends to the whole system at the cost of a
single additional label (Lemma~\ref{lem:extend}).
For two labels, this is the construction underlying the countable-completeness result of Endrullis,
Klop and Overbeek~\cite{endr:klop:over:2018,endr:klop:over:2020}; the general formulation, as a step
from $\DCR_n$ to $\DCR_{n+1}$ is due to Ivanov~\cite{ivanov:2025}.

\begin{lemma}[Labeling extension]\label{lem:extend}
  Let $\pair{B}{{\substep}}$ be a connected, confluent sub-ARS of $\pair{A}{{\step}}$
  such that $B$ is a cofinal subset of $\pair{A}{{\step}}$.
  If $\pair{B}{{\substep}}$ is decreasing Church--Rosser over 
  $\pair{\set{\beta \mid \beta < \alpha}}{<}$ for some ordinal $\alpha \ge 1$, 
  then $\pair{A}{{\step}}$ is decreasing Church--Rosser over $\pair{\set{\beta \mid \beta \le \alpha}}{<}$.
\end{lemma}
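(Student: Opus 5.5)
The plan is to keep the given labeling on $\substep$, add a labeled ``retraction'' from every element of $A$ into $B$ that uses only labels below $\alpha$, and give every remaining step the new top label $\alpha$. Peaks containing a top-labeled step can then be closed entirely with steps of label $<\alpha$, by first retracting into $B$ and then using confluence of $\pair{B}{{\substep}}$.

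\textbf{Construction.} Since $B$ is cofinal, every $a \in A$ reaches some element of $B$. Let $d(a)$ be the least length of a reduction from $a$ into $B$; thus $d(a)=0$ iff $a\in B$.
\begin{itemize}
  \item For each $a \notin B$, use choice to fix one step $a \to s(a)$ with $d(s(a)) = d(a)-1$, and give it label $0$. This label is available since $\alpha \ge 1$.
  \item Every step of $\substep$ keeps its given label $<\alpha$.
  \item Every other step of $\to$ receives label $\alpha$.
\end{itemize}
The sets are disjoint: chosen steps have sources outside $B$, while $\substep$-steps lie inside $B$. Since ${\substep} \subseteq {\to}$, this is a labeling of $\to$ over $\set{\beta \mid \beta\le\alpha}$. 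Iterating $s$ gives, for every $a$, a reduction $a \to_0^* \pi(a)$ with $\pi(a)\in B$, which terminates because $d$ strictly decreases.

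\textbf{Checking the peaks.} Take a peak $c \lstep_\beta a \to_\gamma b$.
\begin{itemize}
  \item If both labels are $<\alpha$, then either both steps are $\substep$-steps, and the given locally decreasing diagram applies, or both are chosen $0$-steps. A mixed case is impossible, because the sources would have to lie both in and outside $B$. In the second case both steps equal $a \to s(a)$, so $b=c$ and the peak is joined trivially.
  \item Otherwise, say $\gamma = \alpha$. Retract $b \to_0^* \pi(b)$ and $c \to_0^* \pi(c)$. The elements $\pi(b),\pi(c)$ lie in $B$, which is connected under $\substep$, so $\pi(b) \leftrightsquigarrow^* \pi(c)$. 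By confluence of $\pair{B}{{\substep}}$ they have a common $\substep$-reduct $d$. Hence $b \to_{<\alpha}^* d$ and $c \to_{<\alpha}^* d$.
\end{itemize}
In the second case these reductions fit Figure~\ref{fig:decreasing:diagram}. On the $b$-side, $b \to_{<\alpha}^* d$ is the prefix $\to_{<\gamma}^*$ followed by empty remaining parts. On the $c$-side, ${\to_{<\alpha}} \subseteq {\to_{<\gamma\,\cup\,<\beta}}$, so $c \to_{<\alpha}^* d$ serves as the final segment, with an empty prefix and no $\to_\gamma^{\equiv}$ step. The case $\beta = \alpha$ is symmetric.

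\textbf{Main obstacle.} The only delicate point is ensuring that the retraction steps cannot interfere with the given diagrams in $B$. I would handle this by giving $0$-labels only to steps with source outside $B$, so that no new low-labeled peak arises at an element of $B$. A second point is that connectedness and confluence of $B$ must be taken with respect to $\substep$, not $\to$, which is exactly what the hypothesis of the lemma provides.
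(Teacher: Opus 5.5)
Your proposal is correct and follows the paper's proof. You label one shortest-path step out of each $a \notin B$ with $0$, keep the labels of $\substep$, give every remaining step label $\alpha$, and close any peak containing an $\alpha$-step by retracting both ends into $B$ and joining there with steps below $\alpha$. You spell out two checks the paper leaves implicit: ruling out mixed low-label peaks, and verifying that the joins fit the decreasing-diagram shape.
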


\begin{proof}
  We extend the labeling from $\pair{B}{{\substep}}$ to $\pair{A}{{\step}}$ as follows.
  For each node $a \in A \setminus B$, 
  we choose one step $a \to a'$ that starts a shortest path from $a$ to $B$ (so $a'$ is one step closer to $B$), and label this step with label $0$.
  For all steps of $\pair{A}{{\step}}$ that remain unlabeled, we assign label $\alpha$.
  Now consider a peak $c \lstep_\gamma x \to_\delta b$ with $x \in A$.
  If this peak is inside the subsystem $\pair{B}{{\substep}}$, then there is a decreasing elementary diagram inside the subsystem.
  Otherwise, the peak is not (fully) inside the subsystem.
  If $b = c$, we can join with empty steps.
  If $b \neq c$, then one of the steps is labeled by $\alpha$, that is, $\gamma = \alpha$ or $\delta = \alpha$.
  Consequently the joining reductions can employ arbitrary steps with labels $< \alpha$:
  we can first use steps with label $0$ to reach the subsystem $\pair{B}{{\substep}}$,
  and then use steps with labels $< \alpha$ to join inside the confluent subsystem.
\end{proof}

In~\cite{endr:klop:over:2018} it has been established that 
countable systems can be labeled with $2$ labels.
A related lemma on the existence of a spanning forest in a countable confluent ARS was  also given in~\cite[Lemma 1]{hiro:nage:oost:oyam:2019}.
In~\cite{ivanov:2025}, Ivanov has extended this result to all systems with a countable cofinality $\cfc{\step}$.
We call a rewrite sequence \emph{cofinal} in $\pair{A}{{\to}}$ if the set of its elements is a cofinal subset of $\pair{A}{{\to}}$.

\begin{lemma}[Countable cofinality]\label{lem:countable}
   Every connected confluent ARS $\pair{A}{{\step}}$ with countable cofinality $\cfc{\step}$
   has a cofinal rewrite sequence
   and is decreasing Church--Rosser over $\pair{\set{0,1}}{<}$. 
\end{lemma}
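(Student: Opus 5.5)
First I construct a cofinal rewrite sequence, and then I derive the two-label labeling from it.

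\emph{Cofinal sequence.} Fix a countable cofinal subset $C = \set{c_0, c_1, c_2, \ldots}$; if $C$ is finite, repeat elements, and by Remark~\ref{rem:finite:cfc} we may even take a singleton. Build $a_0 \steps a_1 \steps a_2 \steps \cdots$ inductively. Put $a_0 = c_0$. Given $a_n$, the elements $a_n$ and $c_{n+1}$ are connected, since the ARS is connected. Hence by confluence (connectedness plus $\CR$ gives joinability of convertible elements) they have a common reduct, and we let $a_{n+1}$ be one such reduct. Concatenating the chosen finite reductions $a_n \steps a_{n+1}$ gives a rewrite sequence $\sigma$. This sequence is finite if from some point on all the reductions are empty; in that case the last element is reachable from every $c_n$ and hence from everything.

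Every $a \in A$ reaches some $c_n$. Then $c_n$ reaches $a_n$, which lies on $\sigma$, so $\sigma$ is cofinal. We may also make $\sigma$ simple. If it is finite, shortcut repetitions. If it is infinite and some element $x$ recurs infinitely often, then every later element reaches $x$, so $\set{x}$ is already cofinal and we replace $\sigma$ by $x$ alone. Otherwise each element occurs only finitely often, and we cut out every loop $x \to \cdots \to x$ by jumping to the last occurrence of $x$. Cofinality is preserved by these operations, because every removed element reaches some later kept element.

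\emph{Labeling.} Let $B$ be the set of elements of the simple sequence $\sigma$, and let $\substep$ consist of the consecutive steps of $\sigma$. The sub-ARS $\pair{B}{{\substep}}$ is deterministic, as each element has at most one successor in the simple sequence. It is also connected, and it is confluent, because any two of its elements reach the later one along $\sigma$. In a deterministic system every local peak is trivial, since $b = c$. So $\pair{B}{{\substep}}$ is $\DCR$ over $\set{0}$, and Lemma~\ref{lem:extend} with $\alpha = 1$ yields $\DCR$ over $\set{0,1}$ for $\pair{A}{{\step}}$.

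Concretely, label $0$ on $\sigma$ and on one chosen shortest-path step toward $B$ from each outside node, and label $1$ elsewhere. The lemma's argument then applies: peaks where both steps are labeled $0$ are deterministic and hence trivial, since each node has exactly one outgoing $0$-step.

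\emph{Main obstacle.} The delicate point is exactly this last check. I must make sure that each node has exactly one outgoing $0$-step, so that two $0$-steps from the same source coincide. This requires $\sigma$ to be simple and its $0$-steps to be disjoint from the chosen shortest-path steps, which holds since the latter start only outside $B$. That is why I normalize $\sigma$ to be simple first. The remaining steps are routine.
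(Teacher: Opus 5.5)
Your proof is correct, but it obtains the cofinal rewrite sequence by a different and more elementary route than the paper.

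\textbf{The paper's route.} For infinite cofinality, the paper takes a countable cofinal set. It enlarges this set, using Ivanov's subfield-extension lemma, to a countable set $C'$ on which the restricted relation is connected and confluent. It then invokes Klop's cofinality property for countable confluent systems (via results of Endrullis, Klop and Overbeek). This yields a cofinal sequence inside $C'$, which is then cofinal in $A$ as well.

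\textbf{Your route.} You run Klop's diagonal construction directly in $A$, choosing $a_{n+1}$ as a common reduct of $a_n$ and $c_{n+1}$, so that $c_n \steps a_n$ lies on the sequence. This works because the construction only needs the set being diagonalized to be countable, not the ambient system. The joining reductions are just finite reductions in $A$ and need not stay inside any countable substructure.

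\textbf{Comparison.} Your route is self-contained and avoids both external lemmas. The paper's route merely reuses published results. Along the way it produces an intermediate countable connected confluent subsystem that the statement does not need.

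The remaining steps coincide with the paper: collapsing to a singleton when some element recurs infinitely often, loop erasure by jumping to last occurrences, and applying Lemma~\ref{lem:extend} with $\alpha = 1$ to the deterministic sub-ARS formed by the simple sequence.

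One small imprecision: the last element of a finite $\sigma$ has no outgoing $0$-step. You should therefore say that each node has \emph{at most} one outgoing $0$-step, which is all your check needs. Your concrete check is in any case already contained in the proof of Lemma~\ref{lem:extend}.
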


\begin{proof}
  Assume that $\pair{A}{{\to}}$ is connected, confluent, and has countable cofinality. 
  Let us show the existence of a cofinal rewrite sequence. Consider two cases.
  
  \textit{Case 1}: $\cfc{\to}$ is finite. 
  Then $\cfc{\to} = 1$ by Remark~\ref{rem:finite:cfc}, so $\pair{A}{{\to}}$ has some element $m \in A$ that is reachable from every other element of $A$. 
  Then the one-element sequence $(m)$ is a (finite) cofinal rewrite sequence.
  
  \textit{Case 2}: $\cfc{\to}$ is infinite. 
  Let $B$ be an infinite countable cofinal subset of $\pair{A}{{\to}}$ and $f$ be a choice function such that for each $a \in B$, $f(a) \in \{b~|~a\,\leftrightarrow\,b\}$. Let $C = B \cup f(B)$, where $f(B)$ is the image of a set under $f$. Then $C$ is a countable \textit{subfield}~\cite{ivanov:2025} of $\to$, i.e., the field of some countable subrelation of $\to$. Then by~\cite[Lemma 24]{ivanov:2025} (a lemma about connected Church--Rosser subfield extension) applied to the relation $\to$ and the set $C$ (and arbitrary sets $B,B',Ps$ and element $x$ in the notation of~\cite{ivanov:2025}), $C$ has an extension to a countable subset $C'$ with $C \subseteq C' \subseteq A$ such that $\pair{C'}{{\to|_{C'}}}$ is connected and confluent. Note that any countable confluent ARS has Klop's cofinality property~\cite[Definition 7 and Theorem 9]{endr:klop:over:2018}. Then since $\pair{C'}{{\to|_{C'}}}$ is connected, from~\cite[Lemma 8]{endr:klop:over:2018} it follows that $\pair{C'}{{\to|_{C'}}}$ has a cofinal rewrite sequence that is also cofinal in $\pair{A}{{\to}}$.
  
  Thus, in any case, $\pair{A}{{\to}}$ has some nonempty finite or infinite cofinal rewrite sequence $a_1 \to a_2 \to a_3 \to \ldots$. We may assume that this sequence is simple: if some element occurs infinitely often, then it is reachable from every element, so $\cfc{\to} = 1$ and Case~1 applies; otherwise, erasing loops (repeatedly continuing from the last occurrence of the current element) yields a simple cofinal rewrite sequence. If this sequence is considered as a sub-ARS of $\pair{A}{{\to}}$, it is connected, confluent, and, moreover, decreasing Church--Rosser over $\pair{\set{0}}{<}$ (all rewrite steps in it can be labeled with 0 for the purpose of decreasing diagrams). Thus $\pair{A}{{\to}}$ is decreasing Church--Rosser over $\pair{\set{0,1}}{<}$ by Lemma~\ref{lem:extend}.
\end{proof}

\begin{example}[Example~\ref{ex:successor} continued]
  The system $\pair{\nat}{{\to}}$ is itself the cofinal rewrite sequence $0 \to 1 \to 2 \to \cdots$ of Lemma~\ref{lem:countable}.
\end{example}

\subsection{Traps and escapes}

The following concepts of \emph{escape}, \emph{trap}, and \emph{trap degree} have been pioneered in~\cite{ivanov:2025}.
They are also key tools for our constructions.
Informally, a high trap degree means ``room to maneuver'': the greater a point's trap degree, the larger the subsystems it can always escape.

\begin{definition}[Escape]
  An \emph{escape} of an element $a \in A$ from $T \subseteq A$ is 
  a finite rewrite sequence $a \to a_1 \to \ldots \to a_n$ ($n \ge 1$) 
  such that $a_1, \ldots, a_n \not\in T$ and, for every $t \in T$, it holds that $a_n \not\to^* t$.
  This concept is illustrated in Figure~\ref{fig:escape}.
  Without loss of generality, we will tacitly assume that escape paths are simple
  (this can always be achieved by cutting out cycles).
\end{definition}
\noindent
Observe that the first element $a$ of the escape is allowed to be in $T$. 
In particular, $a$ can escape $\set{a}$ iff $a \to^+ b$ for some $b$ with $b \not\to^* a$.

\begin{example}[Example~\ref{ex:successor} continued]
  An element $n \in \nat$ has an escape from $T \subseteq \nat$ if and only if $T$ contains no element $> n$:
  an escape $n \to n+1 \to \cdots \to m$ must avoid $T$ and end in some $m$ with $m > t$ for all $t \in T$.
  For instance, the single step $n \to n+1$ is an escape of $n$ from $\set{0,\ldots,n}$.
\end{example}

\begin{lemma}[Escape or reach]\label{lem:escape:or:reach}
  Let $\pair{A}{{\to}}$ be connected and confluent such that $\cfc{\to} > 1$,
  and let $S \subseteq A$ and $a \in A$.
  If $a$ does not reach $S$ (that is, $a \not\to^* s$ for all $s \in S$),
  then $a$ has an escape from $S \cup \set{a}$, and hence also from $S$.
\end{lemma}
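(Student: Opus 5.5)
The plan is to produce the escape directly, as a path from $a$ to a suitably chosen common reduct, using the assumption $\cfc{\to} > 1$ to guarantee that this reduct cannot return to~$a$.

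First, since $\cfc{\to} > 1$, the singleton $\set{a}$ is not a cofinal subset of $\pair{A}{{\to}}$. Hence there is some $b \in A$ with $b \not\to^* a$. By connectedness, $a \cstep^* b$. Confluence then gives a common reduct $d$ with $a \steps d$ and $b \steps d$.

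Next I check the two properties of $d$ that the escape's endpoint needs.
\begin{itemize}
  \item $d \not\to^* a$: otherwise $b \steps d \steps a$, contradicting the choice of~$b$.
  \item $d \not\to^* s$ for every $s \in S$: otherwise $a \steps d \steps s$, contradicting the assumption that $a$ does not reach~$S$.
\end{itemize}
Together these say that $d$ reaches no element of $S \cup \set{a}$.

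Now fix a finite rewrite sequence $a = a_0 \to a_1 \to \cdots \to a_m = d$. Let $k$ be the last index with $a_k = a$, and take the suffix $a = a_k \to a_{k+1} \to \cdots \to a_m = d$. This suffix is nonempty, because $d \neq a$ (as $d \not\to^* a$), so $k < m$.

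By the choice of $k$, none of $a_{k+1}, \ldots, a_m$ equals~$a$. None of them lies in $S$ either: each is a reduct of $a$, and $a$ does not reach~$S$. Its endpoint $d$ reaches nothing in $S \cup \set{a}$. So this suffix is an escape of $a$ from $S \cup \set{a}$.

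Finally, the same sequence is an escape from $S$, since every condition in the definition of escape is monotone under shrinking the set: avoiding $S \cup \set{a}$ and not reaching it implies avoiding $S$ and not reaching it. The only mildly delicate point is ensuring the escape has length $n \ge 1$ with all later elements different from $a$. This is exactly why the argument passes through the witness $b$ supplied by $\cfc{\to} > 1$, rather than using an arbitrary reduct of~$a$.
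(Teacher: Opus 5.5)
Your proof is correct and follows the paper's argument essentially step for step. You use the same ingredients: a witness $b \not\to^* a$ from $\cfc{\to} > 1$, a common reduct $d$ that reaches neither $a$ nor $S$, and a path from $a$ to $d$ whose later elements avoid $S \cup \set{a}$. The only cosmetic difference is that you take the suffix after the last occurrence of $a$, whereas the paper simply chooses a simple rewrite sequence from $a$ to $d$.
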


\begin{proof}
  Since $\cfc{\to} > 1$, the set $\set{a}$ is not cofinal, so there is some $c \in A$ with $c \not\to^* a$.
  By connectedness and confluence, $a$ and $c$ have a common reduct $d$.
  Then $d \not\to^* a$, since otherwise $c \steps d \steps a$; in particular $d \ne a$.
  Let $a = a_0 \to a_1 \to \cdots \to a_n = d$ be a simple rewrite sequence; then $n \ge 1$.
  For $1 \le i \le n$ we have $a_i \ne a$ (by simplicity) and $a_i \notin S$ (since $a$ does not reach $S$).
  Moreover, $d$ reaches neither $a$ nor any element of $S$.
  Thus the sequence is an escape of $a$ from $S \cup \set{a}$,
  and every escape from a set is also an escape from each of its subsets.
\end{proof}

\begin{figure}[ht!]
  \includegraphics{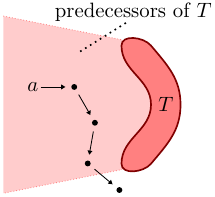}
  \caption{An escape of $a$ from $T$: 
    the rewrite sequence avoids $T$ and ends outside of the predecessors of $T$.
    (The element $a$ itself is allowed to be part of $T$).
    If $a$ cannot escape $T$, then $T$ is called a trap for $a$.
    }
  \label{fig:escape}
\end{figure}

\begin{definition}[Trap]
  A set $T \subseteq A$ \emph{traps} $A' \subseteq A$ if no $a \in A'$ has an escape from $T$.
  Then we also say that $T$ is a \emph{trap for $A'$}.
\end{definition}

\begin{example}[Example~\ref{ex:successor} continued]
  By the characterization of escapes above, a set $T \subseteq \nat$ is a trap for $n$ if and only if $T$ contains an element $> n$.
  For instance, $\set{n+1}$ is a trap for $n$, while $\emptyset$ and $\set{0,\ldots,n}$ are not.
  Consequently, $T$ traps all of $\nat$ if and only if it contains elements above every $n$, that is, if and only if $T$ is infinite.
\end{example}
  
\begin{definition}[Trap degree]\label{def:trap:degree}
  The \emph{trap degree} of a set of elements $A' \subseteq A$ is 
  \begin{align}
    \trap{A'} = \min \set{ \size{B} \mid \text{$B$ is a trap for $A'$}},
  \end{align}
  that is, the minimal cardinality of any trap $B$ for $A'$.
  For $a \in A$, we also write $\trap{a}$ as shorthand for $\trap{\set{a}}$.
\end{definition}

\begin{example}[Example~\ref{ex:successor} continued]
  We have $\trap{n} = 1$ for every $n \in \nat$, whereas $\trap{\nat} = \aleph_0 = \cfc{\to}$.
\end{example}

\noindent
The crucial property employed throughout our constructions is that an element with trap degree $\alpha$
can escape any subsystem $\pair{S}{{\substep}}$ for which $\size{S} < \alpha$.

The trap degree $\trap{A}$ of the entire system is closely related to the minimal cardinality of a cofinal subset.
The difference between a trap for $A$ and a cofinal subset of $\pair{A}{{\to}}$ only concerns the normal forms.
The normal forms must be included in any cofinal subset, but they don't need to be part of a trap (they are already trapped by themselves).
In the setting of our construction (considering a single connected and confluent component) the two coincide: 

\begin{lemma}\label{lem:trap:cfc}
  If $\pair{A}{{\to}}$ is connected and confluent, and $\cfc{\to}$ is infinite,
  then $\trap{A} = \cfc{\to}$. 
\end{lemma}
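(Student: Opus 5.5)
We prove the two inequalities $\trap{A} \le \cfc{\to}$ and $\cfc{\to} \le \trap{A}$ separately. Because $\cfc{\to}$ is infinite, it exceeds $1$. Hence Convention~\ref{conv:components} applies: $\pair{A}{{\to}}$ has no normal forms, and Lemma~\ref{lem:escape:or:reach} is available.

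For $\trap{A} \le \cfc{\to}$, we show that every cofinal subset $C$ is a trap for $A$. Suppose some $a \in A$ had an escape $a \to a_1 \to \cdots \to a_n$ from $C$. By cofinality, $a_n \to^* c$ for some $c \in C$. This contradicts the defining property of an escape, which requires $a_n \not\to^* t$ for every $t \in C$. Taking $C$ of size $\cfc{\to}$ gives the inequality.

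For $\cfc{\to} \le \trap{A}$, let $T$ be a trap for $A$ of minimal size $\trap{A}$. We first claim that $T$ is cofinal. Suppose some $a \in A$ does not reach $T$. Lemma~\ref{lem:escape:or:reach} (with $S = T$) then gives $a$ an escape from $T$, contradicting that $T$ traps $A$. Hence every element reaches $T$, so $T$ is cofinal and $\cfc{\to} \le \size{T} = \trap{A}$.

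In the second inequality, the hypotheses of Lemma~\ref{lem:escape:or:reach} must actually hold, so there is an edge case to check. If $T$ were empty, then every $a$ would fail to reach $T$, and the lemma would again yield an escape; so the argument covers this case uniformly. The only hypothesis to verify is $\cfc{\to} > 1$, which follows because $\cfc{\to}$ is infinite. The infinitude assumption is thus used only to invoke the lemma. It also rules out the case $\cfc{\to}=1$, where a normal form might be needed in a cofinal set but not in a trap.
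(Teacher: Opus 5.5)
Your proof is correct and follows essentially the same route as the paper: both show that cofinal subsets and traps for $A$ coincide. The one difference is in the ``every trap is cofinal'' direction. You invoke Lemma~\ref{lem:escape:or:reach}, while the paper argues directly from the absence of normal forms: if a point does not reach $T$, its first step is already an escape from $T$.
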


\begin{proof}
  There are no normal forms at all (a normal form $n$ would be reachable from every element by confluence and connectedness, giving $\cfc{\to} = 1$), 
  and without normal forms every cofinal subset is a trap and vice versa, so $\trap{A} = \cfc{\to}$. 
\end{proof}

\noindent
Every trap for $A$ is also a trap for each single element $a \in A$.
Hence $\trap{a} \le \trap{A}$ for every $a \in A$, and in the setting of Lemma~\ref{lem:trap:cfc}
the cardinal $\cfc{\to}$ is an upper bound for the trap degree of all elements.
This justifies the following terminology.

\begin{definition}[Maximal trap degree]\label{def:maximal:trap:degree}
  Let $\pair{A}{{\to}}$ be connected and confluent such that $\cfc{\to}$ is infinite.
  An element $a \in A$ has \emph{maximal trap degree} if $\trap{a} = \cfc{\to}$ $(= \trap{A})$.
\end{definition}

\begin{example}[Example~\ref{ex:successor} continued]
  No element of $\nat$ has maximal trap degree, since $\trap{n} = 1 < \aleph_0 = \cfc{\to}$.
  The next example shows the other extreme.
\end{example}

\begin{example}[Finite sets of reals]\label{ex:finite:reals}
  Let $A$ be the set of finite subsets of $\mathbb{R}$, with steps $X \to X \cup \set{r}$ for every $r \in \mathbb{R} \setminus X$.
  This ARS is connected and confluent: any $X, Y \in A$ have the common reduct $X \cup Y$.
  Let $C \subseteq A$ with $\size{C} < 2^{\aleph_0}$. Then $\size{\bigcup C} < 2^{\aleph_0}$, so there is some $r \in \mathbb{R} \setminus \bigcup C$;
  the element $\set{r}$ only reaches sets containing $r$, none of which is in $C$.
  Hence no such $C$ is cofinal, and $\cfc{\to} = \size{A} = 2^{\aleph_0}$.
  The same argument shows that every $X \in A$ has maximal trap degree:
  for every $T \subseteq A$ with $\size{T} < 2^{\aleph_0}$, choose $r \notin X \cup \bigcup T$;
  then the single step $X \to X \cup \set{r}$ is an escape of $X$ from $T$,
  since $X \cup \set{r} \notin T$ and all reducts of $X \cup \set{r}$ contain $r$.
  So $\trap{X} = 2^{\aleph_0}$ for every $X \in A$.
  In particular, this system is almost regular (Definition~\ref{def:almost:regular}),
  regardless of whether $2^{\aleph_0}$ is a regular cardinal, and it is covered by the construction of Section~\ref{sec:regular}.
  The system has no cofinal rewrite sequence (the elements of a rewrite sequence have a countable union),
  and since $\size{A} = 2^{\aleph_0}$, the earlier completeness result for cardinality $\aleph_1$~\cite{ivanov:2025} covers it only under the Continuum Hypothesis.
\end{example}

\begin{figure}
  \centering
  \includegraphics{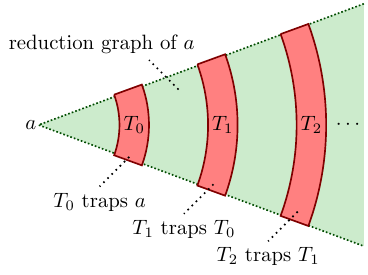}
  \caption{The reduction graph of $a$ with iterated traps along the way.
    $T_0$ is a minimal trap for $a$ and, for $n \ge 0$, $T_{n+1}$ is a minimal trap for $T_n$.}
  \Description{Reduction graph of a with iterated traps.}
  \label{fig:river}
\end{figure}

\noindent
For $a \in A$, the \emph{reduction graph} of $a$ is the sub-ARS of $\pair{A}{{\to}}$ 
consisting of the reducts $\set{b \mid a \steps b}$ of $a$ together with the restriction of $\to$ to this set;
the reduction graph of a set $C \subseteq A$ is the union of the reduction graphs of its elements.

\begin{lemma}[Iterated traps]\label{lem:iterated:traps}
  Let $\pair{A}{{\to}}$ be connected and confluent such that $\cfc{\to}$ is infinite.
  Let $a \in A$.
  Let $T_0$ be a trap of minimal cardinality for $a$,
  and, for every $n \in \nat$, let $T_{n+1}$ be a trap of minimal cardinality for $T_n$.
  This construction is illustrated in Figure~\ref{fig:river}.
  Then $T = \bigcup_{n \in \nat} T_n$ is a cofinal subset of $\pair{A}{{\to}}$.
\end{lemma}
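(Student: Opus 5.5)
The plan is a proof by contradiction along a single finite path. Suppose some $x \in A$ reaches no element of $T = \bigcup_{n} T_n$. Since $\pair{A}{{\to}}$ is connected and confluent, $x$ and $a$ have a common reduct $d$. Then $d$ reaches no element of $T$ either, since otherwise $x \steps d \steps t$ for some $t \in T$. In particular, $d$ reaches no element of any single $T_n$, and $d \notin T$.

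First a remark on well-definedness. The traps $T_n$ exist because $A$ itself traps every subset: an escape needs a first step to an element outside $A$, which is impossible. So minimal-cardinality traps exist by well-ordering of cardinals, and only the trapping property is used below, not minimality.

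Now fix a finite rewrite sequence $a = a_0 \to a_1 \to \cdots \to a_n = d$. The core step is to show by induction on $k$ that there are indices $0 < i_0 < i_1 < \cdots < i_k \le n$ with $a_{i_j} \in T_j$ for each $j \le k$.
\begin{itemize}
\item For $k = 0$: note $n \ge 1$, since $a \ne d$ is forced. Indeed, $a = d$ would mean $a$ reaches nothing in $T_0$; but then, $a$ being trapped by $T_0$ and $\cfc{\to} > 1$ holding (as $\cfc{\to}$ is infinite), Lemma~\ref{lem:escape:or:reach} would give an escape of $a$ from $T_0$, a contradiction. If none of $a_1,\ldots,a_n$ were in $T_0$, the sequence would be an escape of $a$ from $T_0$, because its endpoint $d$ reaches no element of $T_0$. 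This contradicts that $T_0$ traps $a$, so some $a_{i_0} \in T_0$ with $i_0 \ge 1$.
\item For the inductive step: given $a_{i_k} \in T_k$, we have $a_{i_k} \ne d$ because $d \notin T$, so $i_k < n$. Consider the tail $a_{i_k} \to a_{i_k+1} \to \cdots \to a_n$. If none of $a_{i_k+1},\ldots,a_n$ lay in $T_{k+1}$, this tail would be an escape of $a_{i_k} \in T_k$ from $T_{k+1}$, again because $d$ reaches nothing in $T_{k+1}$. That contradicts $T_{k+1}$ trapping $T_k$, so we obtain $i_{k+1} > i_k$ with $a_{i_{k+1}} \in T_{k+1}$.
\end{itemize}

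This yields an infinite strictly increasing sequence of indices bounded by the finite length $n$ of the path, which is a contradiction. Hence every $x$ reaches $T$, i.e.\ $T$ is cofinal. The one point needing care is the requirement, in the definition of escape, that the endpoint reach no element of the trap. This is exactly where we use that $d$ reaches nothing in $T$, and hence nothing in each individual $T_n$. We also need that each witness $a_{i_k}$ differs from $d$, so that the tail is nonempty. I expect no serious obstacle beyond this bookkeeping.
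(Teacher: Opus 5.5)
Your proof is correct and rests on the same idea as the paper's: a finite path from $a$ to a point that reaches nothing in $T$ would yield an escape of $a$ from $T_0$, or of some point of $T_n$ from $T_{n+1}$. The paper gets this escape in one step by taking the \emph{last} point of $T$ on the path, after first showing that no reduct of $a$ escapes $T$ and then using the absence of normal forms; you instead walk forward through $T_0, T_1, \ldots$, conclude by the finiteness of the path, and use Lemma~\ref{lem:escape:or:reach} to rule out the empty path.
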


\begin{proof}
  No reduct $b$ of $a$ can escape $T$: otherwise, consider a reduction from $a$ via $b$ along the escape.
  If it contains no point of $T$, it is an escape of $a$ from $T_0$;
  otherwise its part after the last point $p \in T_n$ of $T$ is an escape of $p$ from $T_{n+1}$.
  Both contradict the choice of the traps.
  As there are no normal forms (by the proof of Lemma~\ref{lem:trap:cfc}), a point that cannot escape $T$ reaches $T$ (its first step would otherwise be an escape).
  Hence every reduct of $a$ reaches $T$, and since every element has a common reduct with $a$ by connectedness and confluence, $T$ is cofinal.
\end{proof}

\begin{example}[Example~\ref{ex:successor} continued]
  For $a = n$, the iterated traps can be chosen as $T_k = \set{n+k+1}$; their union $\set{m \mid m > n}$ is indeed cofinal.
\end{example}

\section{Completeness for regular cardinals}\label{sec:regular}

We prove that every connected and confluent ARS $\pair{A}{{\to}}$ with regular cofinality cardinal $\cfc{\step}$
is decreasing Church--Rosser over $\pair{\set{0,1,2}}{<}$;
the singular case is treated in Section~\ref{sec:fans}.
The regularity of $\cfc{\step}$ ensures, for uncountable $\cfc{\step}$, a cofinal supply of points of maximal trap degree (Lemma~\ref{lem:trap:degree}).

\begin{lemma}[Trap degree]\label{lem:trap:degree}
  Let $\pair{A}{{\to}}$ be confluent and connected such that $\cfc{\step}$ is uncountable.
  For every element $a \in A$ and cardinal $\alpha < \cfc{\step}$, there exists $b \in A$ such that $a \steps b$ and $\trap{b} \ge \alpha$.
  If $\cfc{\step}$ is regular, the same holds for $\alpha = \cfc{\step}$;
  then $\trap{b} = \cfc{\step}$, that is, $b$ has maximal trap degree.
\end{lemma}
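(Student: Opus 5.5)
We argue by contradiction, using the iterated-traps construction of Lemma~\ref{lem:iterated:traps}. Fix $a \in A$ and a cardinal $\alpha < \cfc{\step}$, and suppose that every reduct $b$ of $a$ has $\trap{b} < \alpha$. For each reduct $b$ we may then fix a trap $T_b$ for $b$ with $\size{T_b} < \alpha$. The plan is to assemble a cofinal set of cardinality $< \cfc{\step}$, contradicting Lemma~\ref{lem:trap:cfc} (the set of reducts of $a$ reaching it is cofinal, since every element has a common reduct with $a$).

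The construction goes as follows. Put $S_0 = \set{a}$. For $n \in \nat$, let $S_{n+1} = S_n \cup \bigcup_{b \in S_n} T'_b$, where $T'_b$ is $T_b$ restricted to the points reachable from $b$. We first check that this restriction is still a trap for $b$: an escape of $b$ from $T'_b$ ends at a point that reaches no element of $T'_b$, and hence reaches no element of $T_b$ at all. Its intermediate points lie in the reduction graph of $b$, so they avoid $T_b$ as soon as they avoid $T'_b$. Thus every element of $S_n$ is a reduct of $a$, and the hypothesis applies to it. Let $S = \bigcup_n S_n$.

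To show $S$ traps $a$, we argue exactly as in the proof of Lemma~\ref{lem:iterated:traps}. Given an escape of $a$ from $S$, we look at the last point $p$ on the path lying in $S$ (the start $a$ is in $S$). Then $p \in S_n$ for some $n$, and the remainder of the path after $p$ is an escape of $p$ from $T'_p \subseteq S$, a contradiction. The same reasoning applied to any reduct $c$ of $a$ (prepend $a \steps c$) shows that no reduct of $a$ escapes $S$. Since there are no normal forms, every reduct of $a$ therefore reaches $S$, so $S$ is cofinal.

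The main obstacle is the cardinality bound on $S$. If $\alpha$ is infinite, then inductively $\size{S_n} \le \alpha$, so $\size{S} \le \alpha \cdot \aleph_0 = \alpha < \cfc{\step}$; we may assume $\alpha \ge \aleph_0$, since $\cfc{\step}$ is uncountable and increasing $\alpha$ only strengthens the claim. This contradiction gives some reduct $b$ with $\trap{b} \ge \alpha$.

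For the regular case with $\alpha = \kappa = \cfc{\step}$, suppose every reduct $b$ of $a$ has $\trap{b} < \kappa$. Now each $\size{T'_b} < \kappa$, so by regularity $\size{S_{n+1}} < \kappa$ whenever $\size{S_n} < \kappa$, as a union of fewer than $\kappa$ sets each of size $< \kappa$. Since $\kappa$ is uncountable and regular, $\cf{\kappa} > \aleph_0$, and so the countable union $S$ still has size $< \kappa$. This is where regularity and uncountability are essential. Again we contradict cofinality, so $\trap{b} \ge \kappa$ for some reduct $b$. Combined with $\trap{b} \le \trap{A} = \cfc{\step}$ from Lemma~\ref{lem:trap:cfc}, this yields $\trap{b} = \cfc{\step}$.
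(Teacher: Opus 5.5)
Your proof is correct and follows essentially the same route as the paper. The paper chooses minimal traps $T_{n+1}$ for the sets $T_n$ inside the reduction graph of $a$, bounds $\trap{T_n} \le \sum_{c \in T_n} \trap{c}$, and cites Lemma~\ref{lem:iterated:traps} for cofinality; you instead take the union of the individual traps $T'_b$ directly (exactly the witness behind that sum bound) and redo the iterated-trap argument inline, with the same cardinality estimates in both the infinite and the regular case.
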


\begin{proof}
  Let $a \in A$ and $\alpha < \cfc{\step}$ be arbitrary.
  For $n \in \nat$, let $T_n$ be defined as in Lemma~\ref{lem:iterated:traps}; see also Figure~\ref{fig:river}.
  We may choose all $T_n$ inside the reduction graph of $a$: 
  if $T$ traps $X$, then so does the subset of $T$ consisting of the reducts of elements of $X$, 
  since an escape from this subset only visits and reaches such reducts, and is therefore also an escape from $T$.
  For a contradiction, assume that $\trap{b} < \alpha$ for all $b \in A$ with $a \steps b$. 
  Without loss of generality, we may assume that $\alpha \ge \omega$.
  Then it follows that $\size{T_n} \le \alpha$ for every $n \in \nat$; we prove this by induction on $n$.
  By assumption $\size{T_0} = \trap{a} \le \alpha$, and for $n \ge 0$ we have:
  \begin{align}
    \size{T_{n+1}} = \trap{T_{n}} \le \sum_{c \in T_n} \underbrace{\trap{c}}_{\le \alpha} \overset{\text{IH}}{\le} \alpha \cdot \alpha = \alpha
  \end{align}
  since $\size{T_n} \le \alpha$ by the induction hypothesis (IH).
  Then also $\size{T} = \size{\bigcup_{n \in \nat} T_n} \le \omega \cdot \alpha \le \alpha$ for infinite $\alpha$.
  But $T$ is itself a cofinal subset of $\pair{A}{{\to}}$ by Lemma~\ref{lem:iterated:traps}.
  Hence $\cfc{\step} \le \size{T} \le \alpha$ which contradicts $\alpha < \cfc{\step}$.

  Now consider the case that $\cfc{\to}$ is regular and $\alpha = \cfc{\to}$.
  For a contradiction, assume that $\trap{b} < \alpha$ for all $b \in A$ with $a \steps b$. 
  For $n \in \nat$, we construct $T_n$ as above.
  By induction on $n \in \nat$ we prove that $\size{T_n} < \cfc{\step}$.
  We have $\size{T_0} = \trap{a} < \alpha$ by assumption, and for every $n \in \nat$:
  \begin{align}
    \size{T_{n+1}} = \trap{T_{n}} \le \sum_{c \in T_n} \underbrace{\trap{c}}_{< \alpha} \overset{\text{IH}}{<} \alpha
  \end{align}
  since $\size{T_n} < \alpha$ by the induction hypothesis and $\alpha$ is regular by assumption.
  Here we use that the sum of fewer than $\alpha$ many cardinals, each below~$\alpha$, is again below~$\alpha$ (since $\alpha$ is regular and uncountable).
  Then also $\size{T} = \size{\bigcup_{n \in \nat} T_n} < \alpha$, leading to a contradiction as above.
  Hence there is a reduct $b$ of $a$ with $\trap{b} \ge \cfc{\step}$,
  and $\trap{b} \le \trap{A} = \cfc{\step}$ holds for every $b \in A$ (see the paragraph before Definition~\ref{def:maximal:trap:degree}),
  so $\trap{b} = \cfc{\step}$.
\end{proof}

In fact, the construction presented in this section applies more broadly than systems with regular $\cfc{\step}$;
it applies to all \emph{almost regular} systems:

\begin{definition}[Almost regular]\label{def:almost:regular}
  A connected and confluent ARS $\pair{A}{{\to}}$ is called \emph{almost regular}
  if $\cfc{\step}$ is regular, or
  every element $a \in A$ reduces to some $b$ with $\trap{b} = \cfc{\to}$.
\end{definition}

\noindent
Note that the latter property in the definition is fulfilled by 
all systems with uncountable regular cofinality cardinal $\cfc{\step}$ by Lemma~\ref{lem:trap:degree}.

\begin{example}[Sensor logger]\label{ex:sensor:logger}
  The following system is beyond previous completeness results.
  Consider a sensor logger with states $(i,X)$, where $i \in \set{0,1,2}$ is the mode of operation
  and $X$ is the finite set of real numbers logged so far.
  After an initial two-point calibration (mode~$0$), the device records one reading at a time (mode~$2$),
  each followed by a deterministic processing step (mode~$1$).
  Its steps are given by the following rules, with $y, y_1, y_2$ ranging over $\mathbb{R}$:
  \begin{align*}
    (0,X) &\to (1, X \cup \set{y_1,y_2}) \quad \text{if } y_1 \ne y_2 && \text{(calibration)}\\
    (1,X) &\to (2, X) && \text{(processing)}\\
    (2,X) &\to (1, X \cup \set{y}) && \text{(reading)}
  \end{align*}
  Every state $(i,X)$ reduces to $(1,Z)$ for every finite $Z \supseteq X$ with $\size{Z} \ge 2$.
  Hence any two states have a common reduct, and the system is connected and confluent.
  As in Example~\ref{ex:finite:reals}, a fresh real~$r$ shows that no set of fewer than $2^{\aleph_0}$ states is cofinal, so $\cfc{\to} = 2^{\aleph_0}$.
  Every state $(2,X)$ has maximal trap degree $2^{\aleph_0}$:
  for $\size{T} < 2^{\aleph_0}$, a reading of a fresh real $y \notin X \cup \bigcup\set{Y \mid (j,Y) \in T}$ escapes $T$.
  In contrast, $\trap{(1,X)} = 1$, since the only successor $(2,X)$ forms a trap.
  As every state reaches mode~$2$, the system is almost regular,
  and Theorem~\ref{thm:regular} shows that it is decreasing Church--Rosser with three labels, without assuming the Continuum Hypothesis.
  The system has no cofinal rewrite sequence, and the earlier result for cardinality $\aleph_1$~\cite{ivanov:2025} covers it only under the Continuum Hypothesis.
  If the readings are sets of reals instead (for instance, regions computed by a reachability analysis),
  the same analysis applies with $2^{2^{\aleph_0}}$ in place of $2^{\aleph_0}$;
  then $\size{A} = 2^{2^{\aleph_0}} > \aleph_1$, and~\cite{ivanov:2025} does not apply at all.
\end{example}

For the remainder of this section, we assume that \emph{$\pair{A}{{\to}}$ is almost regular}, thus also connected and confluent.
The case of countable $\cfc{\step}$ has already been settled by Ivanov in~\cite{ivanov:2025} 
(see Lemma~\ref{lem:countable}, which uses only two labels), so we may assume that $\cfc{\step}$ is an \emph{uncountable} cardinal.

\paragraph{Outline of the construction:}
We enumerate a cofinal set $C$ of points of maximal trap degree $\cfc{\to}$ (Definition~\ref{def:maximal:trap:degree}) and process it by transfinite induction.
At each stage we add the next point as a \emph{fan} (a point of maximal trap degree at which the subsystem may branch, see Section~\ref{sec:regular:subsystem}) and join it pairwise to the fans already present.
Because a fan has trap degree $\cfc{\to}$ while the subsystem stays of size $<\cfc{\to}$, it can always \emph{escape} the current subsystem (see the remark after Definition~\ref{def:trap:degree});
confluence then lets the two escape paths meet at a common reduct, which becomes the join.
All branching thereby sits at fans, so the resulting cofinal subsystem is globally almost deterministic (Lemma~\ref{lem:regular:S:almost:deterministic}).
In Sections~\ref{sec:regular:subsystem}--\ref{sec:regular:properties}
we construct this cofinal subsystem $\pair{S}{{\substep}}$ of $\pair{A}{{\to}}$.
In Section~\ref{sec:regular:labelling} we show that it admits a decreasing labeling over $\pair{\set{0,1}}{<}$,
which then extends to the entire system by employing one additional label.

\subsection{Constructing the cofinal subsystem}\label{sec:regular:subsystem}

The construction revolves around \emph{fans}: 
points of maximal trap degree $\cfc\to = \trap{A}$ (Figure~\ref{fig:fans}), 
which we use as way pointers.

\begin{figure}[h!]
  \centering
  \includegraphics{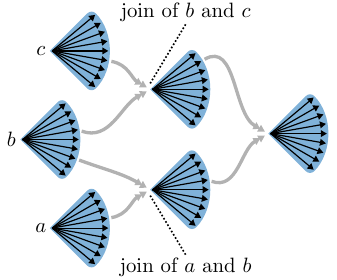}
  \caption{%
    The fans of $a$, $b$ and $c$ in the subsystem $\pair{S}{{\srel}}$.
    The grey lines are deterministic (non-branching).
    The blue points are the fans, each of trap degree $\cfc{\step}$,
    and the grey lines between them are the connecting (escape) paths.
    Each pair of fans has a join which is again a fan.
  }
  \label{fig:fans}
\end{figure}

We start by creating a minimal cofinal subset that only contains elements of trap degree $\cfc{\step}$.
Let $D$ be a cofinal subset of $\pair{A}{{\to}}$ such that $\size{D} = \cfc{\step}$.
Let $C$ be obtained from $D$ by replacing each $a \in D$
by some reduct $b$ of $a$ with the property $\trap{b} = \cfc{\step}$;
these reducts exist since $\pair{A}{{\to}}$ is almost regular.
Then $C$ is a cofinal subset of $\pair{A}{{\to}}$ with $\size{C} = \size{D} = \cfc{\step}$, and
\begin{align}
  \forall c \in C.\; \trap{c} = \size{C} = \cfc{\step} \label{eq:regular:C}
\end{align}
By the axiom of choice, we may assume the existence of a well-ordering of $C$:
\begin{align}
  o: \set{ \alpha \mid \alpha < \size{C} } \to C
\end{align}
So $o$ is a bijection (an enumeration of $C$).

The construction proceeds by a double induction:
the outer induction ensures that every $c \in C$ reaches the subsystem, and
the inner induction is used to construct joins for each pair of fans.

\subsection{Outer induction}

The construction proceeds by induction on ordinals $\alpha < \size{C} = \cfc{\step}$.
We construct
\begin{itemize}
  \item
    a set $F \subseteq C$, whose elements we call \emph{fans}; and
  \item
    a sub-ARS $\pair{S}{{\substep}}$ that contains $F$ together with paths that connect these points.
\end{itemize}
The subsystem $\pair{S}{{\substep}}$ is built up monotonically: 
at each stage we only add to $S$, and at limit stages $S$ is the union of what came before.
We write $F_\alpha$ and $S_\alpha$ for the set $F$ and the subsystem (its carrier together with its steps) at the beginning of stage~$\alpha$,
and $F_{\alpha+1}$ and $S_{\alpha+1}$ for them at the end of stage~$\alpha$;
so $F_0 = S_0 = \emptyset$, and $F_\lambda = \bigcup_{\alpha < \lambda} F_\alpha$ and $S_\lambda = \bigcup_{\alpha < \lambda} S_\alpha$ for limit ordinals~$\lambda$.
Within stage~$\alpha$, $F$ and $\pair{S}{{\substep}}$ denote the current state; thus $F_\alpha \subseteq F \subseteq F_{\alpha+1}$ and $S_\alpha \subseteq S \subseteq S_{\alpha+1}$.

At any stage of the construction, 
we say that nodes $a,b \in F$ \emph{join} in $j \in F$ if
there exist rewrite sequences $a \substep^* j$ and $b \substep^* j$ in the then-current subsystem $\pair{S}{{\substep}}$
such that only the start and end points of these rewrite sequences lie in $F$.
The concept of joins is illustrated in Figure~\ref{fig:fans}.

Throughout, we uphold the following invariants for the state $F_{\alpha+1}$, $S_{\alpha+1}$ at the end of every stage~$\alpha$:
\begin{enumerate}[label=(I\arabic*)]
  \item 
    \emph{size}: $\size{S_{\alpha+1}} \le \max(\size{\alpha},\aleph_0) < \cfc{\step}$;
  \item 
    \emph{determinism}: 
    every node of $S_{\alpha+1} \setminus F_{\alpha+1}$ has at most one $\substep$-successor in $S_{\alpha+1}$; 
  \item 
    \emph{reachability}: 
    every node of $S_{\alpha+1}$ $\substep$-reduces within $S_{\alpha+1}$ to some $a \in F_{\alpha+1}$;
  \item 
    \emph{cofinality}: 
    for every $\beta \le \alpha$, $o(\beta)$ $\step$-reduces to some $a \in F_{\alpha+1}$;
  \item 
    \emph{join}: 
    for all $\beta < \beta' \le \alpha$ with $o(\beta),o(\beta') \in F_{\alpha+1}$,
    $o(\beta)$ and $o(\beta')$ join in some $j \in F_{\alpha+1}$ (in $S_{\alpha+1}$).
\end{enumerate}
We first justify the \emph{size} invariant. This invariant is crucial to ensure that the subsystem $\pair{S}{{\substep}}$ can be escaped.
Each stage~$\alpha$ creates at most one fan plus one join for each $\beta < \alpha$,
and each join consists of two finite rewrite sequences.
Hence each stage~$\alpha$ adds at most $\max(\size{\alpha},\aleph_0)$ nodes.
Taking the union over the $\size{\alpha+1}$ stages up to $\alpha$ gives
\begin{align}
  \size{S_{\alpha+1}} \;\le\; \size{\alpha+1} \cdot \max(\size{\alpha},\aleph_0)
  \;=\; \max(\size{\alpha},\aleph_0)
  \;<\; \cfc{\step}
\end{align}

\subsection{Outer induction step}

We are now at stage $\alpha$ of the induction, and we consider the element $a = o(\alpha)$:
\begin{itemize}
  \item
    If $a \not\in F$ and $a$ reaches $S$, 
    then the \emph{cofinality} invariant already holds for $\beta = \alpha$, 
    and since $a \notin F$ no new joins are required.
    We continue with $\alpha+1$.
  \item
    Otherwise, we add $a$ to $F$ and continue with the construction of joins as described below.
    (If $a$ was already in $F$, then $a$ has been constructed as the result of a prior join.
    We then still need to construct joins for $a$.)
\end{itemize}

\subsection{Inner induction: constructing joins}\label{sec:regular:inner}

We will now proceed with a sub-induction on $\beta < \alpha$ in order to construct pairwise joins for $a$ with elements $o(\beta)$.
This concept of joins is illustrated in Figure~\ref{fig:fans}.
Let $b = o(\beta)$.
If $b \in F$, then we construct a join of $a$ and $b$ as follows.
(If $b \not\in F$, then we continue with $\beta+1$.)

Because $\trap{a} = \trap{b} = \cfc{\step}$ and $\size{S} < \cfc{\step}$ by property~\eqref{eq:regular:C} and the \emph{size} invariant, 
both $a$ and $b$ admit escapes $\zeta_a : a \steps a'$ and $\zeta_b : b \steps b'$, respectively, from the current $S$.
Then the end points $a'$ and $b'$ cannot reach $S$.
By confluence, there exists a common reduct $c \in A$ such that $a' \steps c$ and $b' \steps c$.
By cofinality of $C$, there exists $d \in C$ such that $c \steps d$.
Without loss of generality, we may assume that the escapes all end in $d$, that is,
$\zeta_a : a \steps d$ and $\zeta_b : b \steps d$
(these extensions stay outside of $S$, as the escape endpoints $a'$ and $b'$ can no longer reach it).
We may assume that ($\star$) these paths are simple (by cutting loops)
and that their union is deterministic (drop one branch in case of branching).

Since $d \in C$, there exists some $\delta < \size{C}$ such that $d = o(\delta)$.
We have $\delta > \alpha$ since, by the \emph{cofinality} invariant, every $o(\beta)$ with $\beta \le \alpha$ can reach $S$.
Thus $\delta$ will still be visited in the outer induction (leading to the creation of joins for $d$).

We now add $d$ to $F$ and extend the subsystem $\pair{S}{{\substep}}$ with 
the rewrite sequences $\zeta_a : a \steps d$ and $\zeta_b : b \steps d$.
This establishes the \emph{join} invariant for $a$ and $b$ with witness $d$.
The connecting paths $\zeta_a$ and $\zeta_b$ end in $d \in F$, maintaining \emph{reachability}.
The assumption ($\star$) guarantees that the \emph{determinism} invariant is preserved.
This completes the induction step and, with it, the construction of the
subsystem $\pair{S}{{\substep}}$.

\subsection{Properties of the subsystem}\label{sec:regular:properties}

We collect some important properties of the completed subsystem $\pair{S}{{\substep}}$, where $S = \bigcup_{\alpha} S_\alpha$ and $F = \bigcup_{\alpha} F_\alpha$, the unions over all stages $\alpha < \cfc{\to}$.
Apart from the \emph{size} invariant, which is only needed during the construction, the invariants carry over to the completed subsystem, since each of them concerns only finitely many points, all of which are present from some stage on.
For the \emph{join} invariant we additionally use that no interior point of a join path ever becomes a fan:
every point added to $F$ is new, as it is either some $o(\alpha)$ that does not reach $S$,
or a common reduct $d$ of escape end points, which cannot reach $S$.

\begin{lemma}\label{lem:regular:S:almost:deterministic}
  Let $\pair{A}{{\to}}$ be an almost regular ARS (hence connected and confluent) such that $\cfc{\to}$ is uncountable,
  and let $\pair{S}{{\substep}}$ be the subsystem constructed in Sections~\ref{sec:regular:subsystem}--\ref{sec:regular:inner}.
  Then the subsystem $\pair{S}{{\substep}}$ is globally almost deterministic.
\end{lemma}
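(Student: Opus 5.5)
We need to show: for any $x, y \in S$ with $x \cstep^* y$ (in $\pair{S}{{\substep}}$), there exist almost deterministic $\substep$-reductions $x \substep^* z$ and $y \substep^* z$ to a common $z \in S$. Here determinism is meant with respect to $\substep$ in $S$. The plan is to use the invariants of the completed subsystem and reduce everything to joins between fans.

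\emph{Step 1: every point deterministically reaches a fan.} By \emph{reachability}, each $x \in S$ reduces in $\pair{S}{{\substep}}$ to some fan. Take the shortest such reduction, $x = x_0 \substep x_1 \substep \cdots \substep x_n = f_x \in F$. All $x_i$ with $i<n$ lie outside $F$, so by \emph{determinism} each of these sources has at most one $\substep$-successor. Hence the reduction is deterministic. Moreover, this fan $f_x$ is the \emph{unique} first fan reached from $x$: from a non-fan the only possible continuation is forced, so any reduction from $x$ to a fan passes through $f_x$ first.

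\emph{Step 2: any two fans join almost deterministically.} Let $f,g \in F$; both lie in $C$, say $f=o(\beta)$ and $g=o(\beta')$ with $\beta<\beta'$. By the \emph{join} invariant, which carries over to the completed system as argued above, they join in some $j \in F$. The two join paths have interior points outside $F$, so only their first steps (out of $f$ and out of $g$) may be non-deterministic. Thus each join path is almost deterministic.

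\emph{Step 3: combine.} For $x,y \in S$ (connectedness is not even needed), concatenate the deterministic path $x \substep^* f_x$ with the join path $f_x \substep^* j$; this has at most one non-deterministic step, the one out of $f_x$. Do the same for $y$ via $f_y$. If $f_x = f_y$, take $z = f_x$ with no non-deterministic step. This gives the required common reduct $z=j$.

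The main obstacle is making sure the invariants hold in the \emph{limit} system, not just at each stage. Two points need care. First, determinism of a non-fan node must not be broken later, when a new path is added starting from it. This is excluded because new join paths avoid the current $S$ (escapes) and new fans are fresh points. Second, interior points of join paths must never become fans, as noted in Section~\ref{sec:regular:properties}. I would verify these by checking that any node added at stage $\alpha$ receives all its successors at that stage. This holds since later additions lie outside $S_{\alpha+1}$, apart from starting fans.
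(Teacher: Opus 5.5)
Your proof is correct and follows essentially the same route as the paper. Both take minimal-length (hence deterministic) reductions to fans via the \emph{reachability} and \emph{determinism} invariants, join the two fans by the \emph{join} invariant (whose paths can branch only at their first step), and concatenate; your separate handling of the case $f_x = f_y$ and your check that the invariants persist in the limit system spell out details the paper leaves implicit or treats in the paragraph before the lemma.
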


\begin{proof}
  Let $a, b \in S$; we join them by two $\substep$-sequences, each with at most one non-deterministic step.
  By the \emph{determinism} invariant, the non-deterministic nodes of $\substep$ are contained in $F$.
  By the \emph{reachability} invariant, $a$ and $b$ admit reductions within $S$ to fans in $F$.
  Let $\sigma_a : a \substep^* a'$ and $\sigma_b : b \substep^* b'$ be reductions of minimal length such that $a', b' \in F$.
  By minimality, these reductions do not pass through a fan from $F$ and thus are deterministic. 
  By the \emph{join} invariant the fans $a'$ and $b'$ join in a common fan $c \in F$,
  that is, $a' \substep^* c$ and $b' \substep^* c$ such that only the start and endpoints of each path lie in $F$.
  Hence each of these two paths starts with a single non-deterministic step
  and then continues deterministically until reaching $c$.
  Concatenating, $a \substep^* a' \substep^* c$ and $b \substep^* b' \substep^* c$
  join $a$ and $b$ at the common reduct $c$, each with at most one non-deterministic step.
\end{proof}

\begin{lemma}\label{lem:regular:S:properties}
  Let $\pair{A}{{\to}}$ be an almost regular ARS (hence connected and confluent) such that $\cfc{\to}$ is uncountable,
  and let $\pair{S}{{\substep}}$ be the subsystem constructed in Sections~\ref{sec:regular:subsystem}--\ref{sec:regular:inner}.
  Then the subsystem $\pair{S}{{\substep}}$ is connected, confluent, and cofinal
  in $\pair{A}{{\to}}$, and $S$ is deterministic away from $F$.
\end{lemma}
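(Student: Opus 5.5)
We derive all four claims from the invariants (I2)--(I5), which hold for the completed subsystem as argued just before Lemma~\ref{lem:regular:S:almost:deterministic}, together with that lemma. \emph{Determinism away from $F$} is exactly the \emph{determinism} invariant (I2) read in the limit. A node $s \in S \setminus F$ has at most one $\substep$-successor in every $S_{\alpha+1}$. Suppose $s$ had two distinct successors in $S$. Both steps appear by some common stage, so some $S_{\alpha+1}$ already contains both. We also need $s \notin F_{\alpha+1}$, which follows from $s \notin F$ because $F_{\alpha+1} \subseteq F$. This contradicts (I2).

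\emph{Confluence and connectedness.} By Lemma~\ref{lem:regular:S:almost:deterministic}, any two $a,b \in S$ have a common $\substep$-reduct, since the globally almost deterministic property applies to any pair $a \cstep^* b$. We check its proof: it only used (I3) and (I5) for arbitrary $a,b \in S$, never a conversion. So any two elements of $S$ are joinable in $\substep$, and $S$ is therefore connected. For confluence, take $c \lsubstep^* a \substep^* b$. Then $b,c \in S$, and applying the same joinability gives a common reduct. Hence $\pair{S}{{\substep}}$ is confluent.

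For this we must make sure (I5) covers every pair of fans, not only pairs of the form $o(\beta),o(\beta')$. Every fan lies in $C$, because fans are either $o(\alpha)$ or some $d \in C$. So every fan is $o(\gamma)$ for some $\gamma$. Given two fans $o(\beta), o(\beta')$ with $\beta<\beta'$, we apply (I5) at stage $\alpha = \beta'$. This requires both to already be fans in $F_{\beta'+1}$. That holds for $o(\beta')$: if it is a fan at all, it was either added at stage $\beta'$ or earlier, since the outer step adds $o(\beta')$ whenever it does not reach $S$ or is already in $F$. The case where $o(\beta')$ is not in $F$ at stage $\beta'$ but becomes a fan later is impossible. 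Later fans are new points $d$ that cannot reach $S$, while $o(\beta')$ lies in, or reaches, $S_{\beta'+1}$.

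The remaining worry is $o(\beta)$ becoming a fan only after stage $\beta'$. This cannot happen: by the same newness argument, $o(\beta)$ reaches $S_{\beta+1}$ after stage $\beta$, so it is never added later. I expect this bookkeeping, ensuring every pair of fans is eventually joined, to be the main obstacle. The inner induction at stage $\beta'$ indeed joins $o(\beta')$ with every earlier fan $o(\beta)$.

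\emph{Cofinality.} Let $x \in A$. Since $C$ is cofinal, $x \steps o(\beta)$ for some $\beta$. By (I4) at stage $\beta$, $o(\beta)$ $\step$-reduces to some fan in $F_{\beta+1} \subseteq S$. Hence $S$ is a cofinal subset of $\pair{A}{{\to}}$.
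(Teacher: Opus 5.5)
Your proof is correct and follows the paper's argument. Cofinality comes from the cofinality of $C$ together with (I4), determinism comes from (I2) read in the limit, and connectedness and confluence come from Lemma~\ref{lem:regular:S:almost:deterministic}, whose proof indeed joins arbitrary $a,b \in S$ without assuming a conversion; you are right to point this out. Your extra bookkeeping for (I5) is sound but unnecessary: for two fans $o(\beta),o(\beta')$ you may simply apply (I5) at any stage $\alpha \ge \beta'$ by which both have entered $F$.
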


\begin{proof}
  We address each of the properties:
  \begin{itemize}
    \item 
      \emph{Cofinality:}
      Every $a \in A$ reduces to some $o(\beta) \in C$ (as $C$ is cofinal), 
      and by the \emph{cofinality} invariant $o(\beta)$ reaches some fan $f$ $\in F \subseteq S$.
    \item 
      \emph{Determinism:} this is precisely the \emph{determinism} invariant.
    \item 
      \emph{Connectedness} and \emph{confluence} follow from Lemma~\ref{lem:regular:S:almost:deterministic}.
      \qedhere
  \end{itemize}
\end{proof}

\subsection{Labeling the subsystem}\label{sec:regular:labelling}

We now prove that $\pair{A}{{\to}}$ is decreasing Church--Rosser over $\pair{\set{0,1,2}}{<}$; \emph{three labels suffice}.
First, we show that the subsystem $\pair{S}{{\substep}}$ can be labeled with $\pair{\set{0,1}}{<}$
as a consequence of the fact that the subsystem is globally (and thus also locally) almost deterministic.

\begin{lemma}\label{lem:almost:deterministic}
  Every locally almost deterministic ARS $\pair{S}{{\substep}}$ is decreasing Church--Rosser over $\pair{\set{0,1}}{<}$.
\end{lemma}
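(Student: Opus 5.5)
We label every deterministic step with $0$ and every non-deterministic step with $1$; that is, a step $x \substep y$ gets label $0$ if $x$ is deterministic in $\pair{S}{{\substep}}$, and label $1$ otherwise. This labeling covers $\substep$, so it remains to check that every local peak $c \lsubstep_\beta x \substep_\alpha b$ can be completed into a decreasing elementary diagram in the sense of Definition~\ref{def:decreasing:diagrams}. We split on the labels of the two peak steps. Since the label of a step depends only on its source, both steps of a peak carry the same label.

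\emph{Case $\alpha=\beta=0$.} Then $x$ is deterministic, so $b=c$ and the peak is joined by empty reductions.

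\emph{Case $\alpha=\beta=1$.} Here local almost determinism gives reductions $b \substep^* d$ and $c \substep^* d$, each containing at most one non-deterministic step. Consider the reduction from $b$ to $d$. Its steps before the unique non-deterministic step (if any) are deterministic, so they have label $0 < 1 = \alpha$. The non-deterministic step itself has label $1 = \beta$ and serves as the $\to_\beta^{\equiv}$ step. The remaining steps are deterministic, hence labeled $0$, which lies in $<\alpha \cup <\beta$. So this reduction has the shape $\to_{<\alpha}^*\cdot\to_\beta^{\equiv}\cdot\to_{<\alpha\cup<\beta}^*$ required by Figure~\ref{fig:decreasing:diagram}. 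The same argument applies symmetrically to $c \substep^* d$. If a reduction contains no non-deterministic step, the $\to^{\equiv}$ part is simply taken empty.

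The only point needing care is this matching of the ``at most one non-deterministic step'' condition with the middle $\to_\beta^{\equiv}$ slot. That works precisely because the label-$1$ steps are exactly the non-deterministic ones, so the remaining steps on either side all have label $0$, which lies strictly below $1$. I expect no real obstacle beyond this bookkeeping and the observation that mixed peaks cannot occur.
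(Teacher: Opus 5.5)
Your proposal is correct and follows essentially the paper's own proof: deterministic steps get label $0$ and non-deterministic steps label $1$, so a peak with distinct ends has both steps labeled $1$ and is closed by the two almost deterministic reductions, which have the shape $0^*\,1^{\equiv}\,0^*$. The only difference is cosmetic: you split cases on the (necessarily equal) labels of the two peak steps, whereas the paper splits on whether $b = c$.
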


\begin{proof}
  We label all deterministic steps with $0$ and all non-deterministic steps with $1$.
  Consider a peak $b \lsubstep_\alpha a \substep_\beta c$.
  We show that this peak can be closed by a decreasing elementary diagram as shown in Figure~\ref{fig:decreasing:diagram}.
  If $b = c$, we join with empty reductions.
  If $b \ne c$, then both steps are non-deterministic and hence $\alpha = \beta = 1$.
  Since the system is locally almost deterministic, we can join $b$ and $c$ with reductions 
  that each contain at most one non-deterministic step.
  Thus each of the joining reductions contains at most one step with label $1$ and the other steps have label $0$;
  this gives rise to a decreasing elementary diagram.
\end{proof}

By Lemma~\ref{lem:extend}, the labeling of the cofinal subsystem can be extended to the system as a whole
at the cost of one additional label. 
This justifies our first main result:

\begin{theorem}\label{thm:regular}
  Every almost regular ARS $\pair{A}{{\to}}$ (in particular, $\pair{A}{{\to}}$ is connected and confluent)
  has a cofinal, connected and confluent, globally almost deterministic subsystem $\pair{S}{{\substep}}$,
  and is decreasing Church--Rosser over $\pair{\set{0,1,2}}{<}$.
  In particular, this holds for every connected and confluent ARS $\pair{A}{{\to}}$
  whose cofinality cardinal $\cfc{\to}$ is regular.
\end{theorem}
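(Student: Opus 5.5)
The plan is to split on the size of $\cfc{\to}$ and, in the main case, simply assemble the lemmas of this section.

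\emph{Countable cofinality.} If $\cfc{\to}$ is countable (finite or $\aleph_0$), Lemma~\ref{lem:countable} applies directly. It supplies a simple cofinal rewrite sequence. Viewed as a subsystem, this sequence is cofinal, connected and confluent. It is also globally almost deterministic: every node has at most one successor inside it, and any two nodes are joined by following the sequence forward to the later one. Lemma~\ref{lem:countable} also gives $\DCR$ over $\pair{\set{0,1}}{<}$, hence over $\pair{\set{0,1,2}}{<}$.

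\emph{Uncountable cofinality.} Otherwise, take $\pair{S}{{\substep}}$ to be the subsystem constructed in Sections~\ref{sec:regular:subsystem}--\ref{sec:regular:inner}. The construction needs $C$, a cofinal set of cardinality $\cfc{\to}$ consisting of points of maximal trap degree. Almost regularity provides such points: either directly by its second clause, or, when $\cfc{\to}$ is regular and uncountable, via Lemma~\ref{lem:trap:degree}.

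The properties of $S$ then come from the earlier lemmas:
\begin{itemize}
  \item Lemma~\ref{lem:regular:S:properties} gives that $S$ is cofinal, connected and confluent.
  \item Lemma~\ref{lem:regular:S:almost:deterministic} gives that $S$ is globally almost deterministic.
  \item Taking $a \lsubstep c \substep b$, the elements $a$ and $b$ are connected, so global almost determinism implies local almost determinism.
  \item Lemma~\ref{lem:almost:deterministic} then makes $\pair{S}{{\substep}}$ decreasing Church--Rosser over $\pair{\set{0,1}}{<}$.
\end{itemize}
Finally, Lemma~\ref{lem:extend} with $\alpha = 2$ lifts this to $\DCR$ over $\pair{\set{0,1,2}}{<}$ for the whole system.

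\emph{The ``in particular'' clause.} If $\cfc{\to}$ is regular, the system is almost regular by the first clause of Definition~\ref{def:almost:regular}. The ``in particular'' statement is therefore immediate.

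\emph{Main obstacle.} The real work is correctness of the transfinite construction, which the earlier lemmas take for granted. Two checks matter most.
\begin{itemize}
  \item The \emph{size} invariant must keep $\size{S} < \cfc{\to}$ at every stage, so that fans can still escape. This uses $\max(\size{\alpha},\aleph_0) < \cfc{\to}$ for $\alpha < \cfc{\to}$ with $\cfc{\to}$ uncountable.
  \item The invariants must pass to the union. In particular, interior points of join paths must never later become fans. This holds because every new fan is a point that does not reach the current $S$.
\end{itemize}
Both points are addressed in Section~\ref{sec:regular:properties}, so here I would only cite them.
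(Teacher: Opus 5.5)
Your proposal is correct and follows the paper's proof essentially step for step. The countable case goes through Lemma~\ref{lem:countable}, with the simple cofinal sequence serving as a deterministic subsystem, and the uncountable case chains Lemmas~\ref{lem:regular:S:properties}, \ref{lem:regular:S:almost:deterministic}, \ref{lem:almost:deterministic} and~\ref{lem:extend}, the last applied with $\alpha = 2$, exactly as the paper does; your explicit remark that global almost determinism implies the local version is a detail the paper leaves implicit.
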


\begin{proof}
  If $\cfc{\to}$ is countable, the claim follows from Lemma~\ref{lem:countable} (the simple cofinal rewrite sequence constructed in its proof is a deterministic subsystem with the required properties).
  Thus assume that $\cfc{\to}$ is uncountable.
  Let $\pair{S}{{\substep}}$ be the subsystem of $\pair{A}{{\to}}$ constructed in Sections~\ref{sec:regular:subsystem}--\ref{sec:regular:inner}.
  By Lemma~\ref{lem:regular:S:almost:deterministic}, $\pair{S}{{\substep}}$ is almost deterministic,
  and consequently decreasing Church--Rosser over $\pair{\set{0,1}}{<}$ by Lemma~\ref{lem:almost:deterministic}. 
  Moreover, $\pair{S}{{\substep}}$ is connected, confluent and cofinal in $\pair{A}{{\to}}$ by Lemma~\ref{lem:regular:S:properties}.
  Hence $\pair{A}{{\to}}$ is decreasing Church--Rosser over $\pair{\set{0,1,2}}{<}$ by Lemma~\ref{lem:extend}.
\end{proof}

The singular case is treated in Section~\ref{sec:fans}.

\section{Completeness for singular cardinals}\label{sec:fans}

\newcommand{\branchpoints}{\mathcal{B}}

The construction of Section~\ref{sec:regular} requires the system to be almost regular:
it relies on a cofinal supply of points of \emph{maximal} trap degree $\cfc{\to}$,
which is guaranteed to exist when $\cfc{\to}$ is regular (Lemma~\ref{lem:trap:degree}).
We now treat the remaining case, in which $\cfc{\to}$ is \emph{singular}.
For singular $\cfc{\to}$ no point needs to have trap degree $\cfc{\to}$ at all~(Example~\ref{ex:aleph:omega}).
We therefore introduce a two-level branching object, shown in Figure~\ref{fig:double:fan},
that we call a \emph{double fan}.
The first-level fan (on the left) is a \emph{hub} from which pairwise disjoint, deterministic escape paths emanate
leading to $\cf{\cfc{\to}}$ second-level fans (on the right) whose trap degree is unboundedly high below $\cfc{\to}$.

\begin{example}[Cofinality $\aleph_\omega$]\label{ex:aleph:omega}
  Fix sets $W_0 \subseteq W_1 \subseteq W_2 \subseteq \cdots$ with $\size{W_i} = \aleph_i$ for every $i \in \nat$.
  Let $A = \set{(n,X) \mid n \in \nat,\ X \subseteq W_n \text{ finite}}$, with steps
  \begin{align*}
    (n,X) \to (n, X \cup \set{\xi})\text{ for }\xi \in W_n \setminus X
    &&\text{and}&&
    (n,X) \to (n+1,X)\;.
  \end{align*}
  The reducts of $(n,X)$ are exactly the elements $(m,Y)$ with $m \ge n$ and $X \subseteq Y$,
  so the system is connected and confluent: $(m,Y)$ and $(m',Y')$ have the common reduct $(\max(m,m'),\, Y \cup Y')$.
  
  We have $\cfc{\to} = \aleph_\omega$:
  if $\size{C} < \aleph_\omega$, then $\size{C} < \aleph_n$ for some $n$,
  so some $\xi \in W_n$ occurs in no second component of an element of $C$,
  and $(n,\set{\xi})$ reaches no element of $C$.
  
  Every $(n,X)$ has trap degree exactly $\aleph_n$:
  its $\aleph_n$ immediate successors form a trap (an escape must start with a step leaving the trap),
  and for $\size{T} < \aleph_n$ a step $(n,X) \to (n, X \cup \set{\xi})$ with a fresh $\xi \in W_n$ escapes $T$.
  Thus the trap degrees are cofinal in $\aleph_\omega$ but no element has maximal trap degree.
  The system is not almost regular, and the fans of Section~\ref{sec:regular} do not exist.
  A double fan instead sends its branches to points of ever higher level~$n$, whose trap degrees $\aleph_n$ are cofinal in $\aleph_\omega$.
\end{example}

The crucial property of the double fan is \emph{survival}.
In the singular case, points of maximal trap degree need not exist (Example~\ref{ex:aleph:omega}),
and any point of smaller trap degree eventually gets trapped as $S$ grows toward~$\cfc{\to}$. 
By spraying out $\cf{\cfc{\to}}$ branches whose trap degrees are cofinal in $\cfc{\to}$, 
a double fan guarantees that at \emph{every} stage some branch is still escapable, keeping the subsystem connectable.

\begin{figure}[h!]
  \centering
  \includegraphics{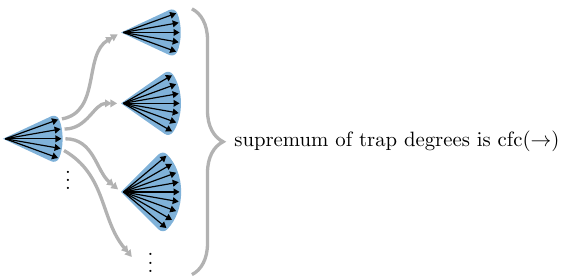}
  \caption{
    Double fan construction. 
    For the second-level fans, displayed on the right, only the source points are part of the double fan,
    their outgoing steps are not.}
  \label{fig:double:fan}
\end{figure}

\paragraph{Outline of the construction: }
The construction proceeds by transfinite induction over an enumeration of a cofinal set $C$, alternating between two procedures.
For each point of $C$ not yet reached by the subsystem, we build a \emph{double fan} (Section~\ref{sec:singular:double:fan}), which establishes reachability,
and then invoke the \emph{join closure} procedure (Section~\ref{sec:singular:closure}).
The join closure joins the branch end points of the new double fan, both \emph{internally} (two branch end points of the same double fan)
and \emph{externally} (with branch end points of earlier double fans).
It is moreover \emph{recursive}, joining all pairs of joins as well.
Every join is formed with a single-fan escape junction of strictly higher trap degree than its two inputs, so that a parent outlives its children.
Because the branches of a double fan have trap degrees cofinal in $\cfc{\to}$, some branch always remains escapable;
together these properties keep the growing subsystem $S$ cofinal, connected, and almost deterministic.

The join closure procedure makes use of commutative binary trees to keep track of what points have already been joined.
The leaves of these trees are the branch end points of the double fans.

\begin{definition}
  We define \emph{(commutative, binary) trees $\trees{L}$ over a set $L$} inductively as follows:
  \begin{enumerate}
    \item $\leaf{l} \in \trees{L}$ for every $l \in L$, and
    \item $\node{l}{r} \in \trees{L}$ for every $l,r \in \trees{L}$.
  \end{enumerate}
\end{definition}

Throughout this section, $\pair{A}{{\to}}$ is a connected and confluent ARS (Convention~\ref{conv:components}).
Then $\trap{A} = \cfc{\to}$. We write $\kappa = \cfc{\to}$.
Assume that $\kappa$ is singular and hence an uncountable limit cardinal.
Let $\lambda = \cf{\kappa}$ together with a function
\begin{align}
  \gamma : \set{ \alpha\text{ an ordinal} \mid \alpha < \lambda} \to \set{\beta\text{ a cardinal} \mid \beta < \kappa}
\end{align}
whose values are cofinal in $\kappa$.
The function $\gamma$ is a \emph{fundamental sequence} for $\kappa$:
$\gamma(\alpha) < \kappa$ for all $\alpha$,
and for every $\beta < \kappa$ there exists some $\alpha$ with $\beta < \gamma(\alpha)$.
Thus the supremum of this sequence is $\kappa$.

Without loss of generality, let $\pair{A}{{\to}}$ be \emph{not} almost regular;
the case of almost regular systems was already covered in Section~\ref{sec:regular}.\footnote{
  In principle, the construction for singular $\cfc{\to}$ can also be applied to almost regular systems.
  However, this requires some extra bookkeeping to remember the intended minimal degree of each node.
  Excluding almost regular systems avoids this bookkeeping since we can always find reducts with larger trap degree.
}
Then there exists an element $a \in A$ such that all reducts $b$ of $a$ have trap degree $\trap{b} < \kappa$.
Let $D$ be a cofinal subset of $\pair{A}{{\to}}$ such that $\size{D} = \kappa$,
and define 
\begin{align}
  C = D \cap \set{b \mid a \steps b}
\end{align}
Then $C$ is again a cofinal subset:
every $c \in A$ must have a common reduct with $a$ by confluence and connectedness,
and this common reduct rewrites to some $d \in D$;
then $d$ is also a reduct of $a$ and thus in $C$.
Thus $C$ is a cofinal subset and $\size{C} \le \size{D} = \cfc{\step}$, hence $\size{C} = \cfc{\step} = \kappa$.
Moreover
\begin{align}
  \forall c \in C.\; \forall d \in A.\; c \steps d \implies \trap{d} < \kappa \;, \label{eq:singular:C}
\end{align}
by construction of $C$, that is,
no element of $C$ has a reduct with the maximal trap degree $\kappa$.

\subsection{Escape junctions}

Throughout this construction,
we repeatedly need to escape a set $S \subseteq A$ of size $< \kappa$ from finitely many points $E$,
and join the escape paths in a common reduct with a trap degree \emph{strictly} above some cardinal $\beta < \kappa$.
We summarize this construction in the following lemma.

\begin{lemma}[Escape junctions]\label{lem:escape:junction}
  Let the following be given:
  \begin{enumerate}
    \item $\pair{A}{{\to}}$ a confluent, connected ARS such that $\cfc{\step}$ is uncountable and singular;
    \item $S \subseteq A$ a set with $\size{S} < \cfc{\step}$;
    \item $E \subseteq A$ a non-empty finite set of points such that each $e \in E$ has an escape from $S$;
    \item $\beta < \cfc{\step}$ a cardinal;
  \end{enumerate}
  There exists a point $\junction{S}{E}{\beta} \in A$ and, for each $e \in E$, a rewrite sequence
  $\pi_e : e \steps \junction{S}{E}{\beta}$, such that:
  \begin{enumerate}
    \item $\trap{\junction{S}{E}{\beta}} > \beta$;
    \item $\junction{S}{E}{\beta}$ cannot reach $S$, that is, $\junction{S}{E}{\beta} \not\to^* s$ for every $s \in S$;
    \item each $\pi_e$ is simple and meets $S$ only in its starting point; and
    \item the union $\bigcup_{e \in E} \pi_e$ is deterministic (every node has at most one successor).
  \end{enumerate}
\end{lemma}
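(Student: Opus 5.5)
We first produce, for each $e \in E$, an escape $\zeta_e : e \steps e'$ from $S$, which exists by hypothesis~(3). By the definition of escape, every point of $\zeta_e$ after $e$ lies outside $S$, and the endpoint $e'$ does not reach $S$. Since $E$ is finite and nonempty, we can apply confluence and connectedness repeatedly (by induction on $\size{E}$, as in Remark~\ref{rem:finite:cfc}). This gives a common reduct $c$ of all the endpoints $e'$. Any point reachable from some $e'$ still cannot reach $S$, because otherwise $e'$ would reach $S$. Hence the extended paths $e \steps e' \steps c$ meet $S$ at most in their starting point $e$, and $c$ does not reach $S$.

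Next we raise the trap degree. Because $\cfc{\to}$ is singular, it is an uncountable limit cardinal, so $\beta^+ < \cfc{\to}$ (and if $\beta$ is finite, $\aleph_0<\cfc{\to}$). By the first part of Lemma~\ref{lem:trap:degree}, applied with $\alpha = \beta^+$, there is a reduct $d$ of $c$ with $\trap{d} \ge \beta^+ > \beta$. We set $\junction{S}{E}{\beta} = d$. Since $d$ is a reduct of $c$, it cannot reach $S$, and every point of the segment $c \steps d$ also avoids $S$ and cannot reach it. This establishes properties (1) and (2). Each path $e \steps d$ visits $S$ at most at its start.

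It remains to make the paths simple and their union deterministic, which is the only delicate bookkeeping. Simplicity of each single path is obtained by cutting out cycles. Cutting cycles only removes points, so the path still meets $S$ only possibly at $e$. To get determinism, we build the paths $\pi_e$ one at a time for an enumeration $e_1,\dots,e_k$ of $E$, with $\pi_{e_1}$ the simple path obtained above.

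For $e_{i+1}$, we take its simple path to $d$ and cut it at the first point $p$ that already lies on the union $U$ of the previously chosen paths. From $p$ onward we follow the unique path in $U$ from $p$ to $d$. Inductively, $U$ is a tree directed toward $d$ in which every node has at most one successor, so this continuation is well defined and unique. If $p$ is not $e_{i+1}$ itself, the new prefix is disjoint from $U$ before $p$, so it adds no second successor to any existing node. If $p = e_{i+1}$, then $e_{i+1}$ already lies on $U$ and we take its existing path.

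One subtlety is whether a prefix can re-enter $S$. It cannot, because every point after the start was already outside $S$. Likewise, the continuation inside $U$ from $p$ avoids $S$: if $p \neq e_{i+1}$, then $p$ is a later point of an escape and so not in $S$, and points of $U$ downstream of a non-$S$ point do not reach $S$. There is one more case to check, namely that the start $e_j$ of an earlier path might lie in $S$ and be hit in the middle of the new path. But the new path's interior points are not in $S$, so it cannot pass through such an $e_j$ except as its own start.

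Simplicity of the combined path follows because the prefix before $p$ is disjoint from $U$, and the tail lies in $U$ and is simple. This yields properties (3) and (4), and we expect this merging step to be the main obstacle, though it is routine.
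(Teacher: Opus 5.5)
Your proof is correct and follows the paper's own argument: escapes from $S$, a common reduct of their endpoints by confluence, Lemma~\ref{lem:trap:degree} applied with $\beta^+ < \cfc{\step}$ (using that a singular cardinal is a limit cardinal), extension of the escapes to the resulting point, cutting cycles, and pruning branches. Your inductive merging into a tree directed toward $d$ simply spells out the paper's one-line ``drop one branch in case of branching''. One small slip: points of $U$ downstream of a non-$S$ point need not be unable to reach $S$, since interior points of an escape may well reach $S$. But you only need that they are not \emph{in} $S$, and this holds because they are non-initial points of earlier paths $\pi_{e_j}$.
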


\begin{proof}
  By assumption, each $e \in E$ admits an escape $\zeta_e : e \steps e'$ from $S$; in particular,
  $e'$ cannot reach $S$.
  By connectedness and confluence, there exists a common reduct $f \in A$ such that $e' \steps f$ for every $e \in E$.
  By Lemma~\ref{lem:trap:degree}, there exists $g \in A$ such that $f \steps g$ and $\trap{g} \ge \beta^+$.
  Here we use that $\cfc{\step}$ is singular and hence a limit cardinal;
  thus $\beta < \cfc{\step}$ implies $\beta^+ < \cfc{\step}$.
  Without loss of generality, we may assume that the escapes all end in $g$, that is,
  $\zeta_e : e \steps g$ for every $e \in E$
  (these extensions stay outside of $S$, as the escape endpoints can no longer reach it).
  Moreover, we may assume that these paths are simple (by cutting loops)
  and that their union is deterministic (drop one branch in case of branching).
  Then $g$ has all the required properties.
\end{proof}

\subsection{Induction}\label{sec:singular:induction}

Recall that $C$ is a cofinal subset of $\pair{A}{{\to}}$ with property~\eqref{eq:singular:C}. Let
\begin{align}
  o: \set{ \alpha \mid \alpha < \size{C} } \to C
\end{align}
be a bijection (an enumeration of $C$).

The construction proceeds by induction on ordinals $\alpha < \size{C}$.
We construct
\begin{itemize}
  \item
    a set $H \subseteq A$ (consisting of hubs of double fans constructed so far);
  \item
    a set $F \subseteq A$ (consisting of branch end points of double fans constructed so far);
  \item
    a partial function $\sjoins: \trees{A} \pmap A$; and
  \item
    a sub-ARS $\pair{S}{{\substep}}$ that contains $F$ together with paths that connect these points.
\end{itemize}
These ingredients are built up monotonically:
at each stage we only extend $H$, $F$, $\sjoins$ and $\pair{S}{{\substep}}$,
and at limit stages we take the union.
Initially $H = \emptyset$, $F = \emptyset$, $\dom{\sjoins} = \emptyset$, and $S = \emptyset$.
We write $H_\alpha$, $F_\alpha$, $\sjoins_\alpha$ and $S_\alpha$ for these ingredients at the beginning of stage~$\alpha$ 
(where $S_\alpha$ comprises the carrier and the steps of the subsystem),
and $H_{\alpha+1}$, $F_{\alpha+1}$, $\sjoins_{\alpha+1}$ and $S_{\alpha+1}$ for them at the end of stage~$\alpha$;
at limit stages we take unions.
Within stage~$\alpha$, $H$, $F$, $\sjoins$ and $\pair{S}{{\substep}}$ denote the current state,
which lies between the states at the beginning and at the end of the stage; for instance, $S_\alpha \subseteq S \subseteq S_{\alpha+1}$.
As $\sjoins$ is only ever extended, we write $\joins{t}$ regardless of the stage.

\begin{notation}
  We write $a \adet b$ if there exists an almost deterministic rewrite sequence from $a$ to $b$ in the subsystem $\pair{S}{{\substep}}$.
\end{notation}

For stage $\alpha$, we define a \emph{size threshold}
\begin{align}
  \mu_\alpha \;=\; \max(\size{\alpha},\; \lambda,\; \omega) \;<\; \kappa
\end{align}
We say that a point $a \in A$ is \emph{alive (at stage $\alpha$)} if $\trap{a} > \mu_\alpha$.
Since $\size{S} \le \mu_\alpha$ throughout stage~$\alpha$, an alive point can still escape the current subsystem, and hence is still usable to form new joins;
the bookkeeping below serves precisely to keep the relevant points alive long enough.
The bound $\mu_\alpha$ is chosen in such a way that a point that is alive at the beginning of the stage will still be alive at the end of the stage.
This is necessary for the join closure construction (Section~\ref{sec:singular:closure})
as we must prevent a point that participates in the closure construction (was alive at the start)
from getting trapped during the construction (and thus is no longer able to escape the growing subsystem).

The invariants refer to the \emph{hubs} (the set $H$), from which the branches of a double fan emanate,
the \emph{branch end points} of double fans (the set $F$), and the \emph{join points} $\joins{t}$ for $t \in \dom{\sjoins}$;
these are constructed in Sections~\ref{sec:singular:double:fan} and~\ref{sec:singular:closure}.
Throughout, we uphold the following invariants for the state $H_{\alpha+1}$, $F_{\alpha+1}$, $\sjoins_{\alpha+1}$, $S_{\alpha+1}$ at the end of every stage~$\alpha$, which fall into four groups by purpose.
The \emph{size}, \emph{sub-maximal degree}, and \emph{survival} invariants keep the subsystem small enough, and its points high enough in trap degree, to remain escapable.
The \emph{cofinality} invariant ensures that every point is eventually reached.
The \emph{join structure}, \emph{join closure}, and \emph{degree monotonicity} invariants track which points have been joined, and keep every join alive at least as long as its inputs.
Finally, the \emph{determinism} invariant confines all branching to hubs and join points.
In detail:
\begin{enumerate}[label=(J\arabic*)]
  \item \label{it:singular:size}
    \emph{size}: $\size{F_{\alpha+1}} \le \mu_\alpha$ and $\size{S_{\alpha+1}} \le \mu_\alpha$;
  \item \label{it:singular:sub:max}
    \emph{sub-maximal degree}: $\trap{a} < \kappa$ for every $a \in S_{\alpha+1}$;
  \item \label{it:singular:survival}
    \emph{survival}:
    every hub, branch end point and join point created at stage $\alpha$ is alive at $\alpha$
    (the interior nodes of the added escape paths need not be alive; they only need to be fresh, that is, not yet in $S$);
  \item \label{it:singular:cofinality}
    \emph{cofinality}: every $o(\beta)$, $\beta\le\alpha$, $\to$-reduces to some point in $S_{\alpha+1}$;
  \item \label{it:singular:join}
    \emph{join structure} (where $\substep$ and $\adet$ refer to $S_{\alpha+1}$):
    \begin{itemize}
      \item $\dom{\sjoins_{\alpha+1}}\subseteq\trees{F_{\alpha+1}}$,
      \item $\joins{\leaf{a}} = a$ for all $a \in F_{\alpha+1}$,
      \item if $\joins{t_1} \substep x_1 \substep \cdots \substep x_n = \joins{t}$ with $n \ge 1$ and none of $x_1,\ldots,x_{n-1}$ is of the form $\joins{s}$,
        then $t = \node{t_1}{t_2}$ for some $t_2$, and
      \item for all $t = \node{t_1}{t_2}\in\dom{\sjoins_{\alpha+1}}$, we have $\joins{t_1} \adet \joins{t} \ladet \joins{t_2}$;
    \end{itemize}
  \item \label{it:singular:closure}
    \emph{join closure}:
    whenever $\node{t_1}{t_2} \in \trees{F_{\alpha+1}}$ has both $\joins{t_1},\joins{t_2}$ alive at stage~$\alpha$,
    $\node{t_1}{t_2} \in \dom{\sjoins_{\alpha+1}}$;
  \item \label{it:singular:monotonicity}
    \emph{degree monotonicity}: if $t \in \dom{\sjoins_{\alpha+1}}$, then
    \begin{align}
      \trap{\joins{t}} > \trap{\joins{s}} \quad \text{for every strict subtree $s$ of $t$}
    \end{align}
    So a join of alive points is alive;
  \item \label{it:singular:determinism}
    \emph{determinism}:
    with $\branchpoints'_{\alpha+1} = \set{\joins{t} \mid t \in \dom{\sjoins_{\alpha+1}}}$
    and $\branchpoints_{\alpha+1} = H_{\alpha+1} \cup \branchpoints'_{\alpha+1}$,
    every node of $S_{\alpha+1} \setminus \branchpoints_{\alpha+1}$ is $\substep$-deterministic in $S_{\alpha+1}$,
    and $\substep$-reduces within $S_{\alpha+1}$ to a point in $\branchpoints'_{\alpha+1}$ (via deterministic steps).
\end{enumerate}
We immediately state the justifications for the invariants:
\begin{itemize}
  \item
    Invariant~\ref{it:singular:size} holds since in every stage $\alpha$:
    \begin{itemize}
      \item
        We add at most one double fan (consisting of $\lambda$ finite rewrite sequences),
        and extend $F$ by at most $\lambda$ points (the branch endings of the double fan).
      \item
        We construct at most $\size{\trees{F}} \le \mu_\alpha$\footnote{
          The set $\trees{F}$ has $\le \mu_\alpha$ leaves, and $\le \mu_\alpha \cdot \mu_\alpha = \mu_\alpha$ trees of height $n \in \nat$.
          Thus $\size{\trees{F}} \le \mu_\alpha \cdot \aleph_0 = \mu_\alpha$.
        } joins (each consisting of two finite rewrite sequences).
    \end{itemize}
    Thus the $\le \size{\alpha} \le \mu_\alpha$ earlier stages,
    each contribute $\le \mu_\alpha$, and $\mu_\alpha \cdot \mu_\alpha = \mu_\alpha$.
  \item
    Invariant~\ref{it:singular:sub:max} follows from Equation~\eqref{eq:singular:C} since every $a \in S$ is a reduct of some $c \in C$.
  \item
    Invariant~\ref{it:singular:survival} follows since the hub and branch degrees of a new double fan are taken $> \mu_\alpha$,
    and a join carries $\trap{j_{a,b}} > \max(\trap{a},\trap{b})$, strictly above both of its inputs;
    the interior nodes of the added paths are fresh by Lemma~\ref{lem:escape:junction}, since each path meets $S$ only in its starting point.
  \item
    Invariant~\ref{it:singular:cofinality} follows since, at each stage $\alpha$, we check whether $o(\alpha)$ reaches $S$;
    if it does not, we construct a double fan for $o(\alpha)$ to establish reachability.
  \item
    Invariant~\ref{it:singular:join} is established in the construction in Section~\ref{sec:singular:closure}.
  \item
    Invariant~\ref{it:singular:closure}:
    suppose $\joins{t_1}$ and $\joins{t_2}$ are alive at stage $\alpha$.
    Since $F$ grows only at double fan creation stages and aliveness only decreases,
    both were already alive at the last fan creation stage $\beta \le \alpha$ with $\node{t_1}{t_2} \in \trees{F}$.
    Its closure reaches $\node{t_1}{t_2}$ after $t_1$ and $t_2$;
    by the size invariant $\size{S} \le \mu_\beta < \trap{\joins{t_i}}$ throughout,
    so both still escape $S$ and the join is formed, unless $\node{t_1}{t_2}$ was already joined at an earlier stage.
    Either way $\node{t_1}{t_2} \in \dom{\sjoins}$.
  \item
    Invariant~\ref{it:singular:monotonicity}:
    for the children $s$ of $t$ the invariant follows immediately since
    $\joins{t}$ is chosen with this property,
    and for other subtrees it follows by induction on the tree structure.
  \item
    Invariant~\ref{it:singular:determinism} follows since, by construction,
    the only branching points of $S$ are the hubs~$H$ of double fans,
    and points participating in the join closure;
    the latter are precisely the points in the image of $\sjoins$ and they contain the branch endings of double fans.
    Moreover, by construction all paths added to $\pair{S}{{\substep}}$ are escape paths ending in a node in $\branchpoints'$.
\end{itemize}

\subsection{Induction step}\label{sec:singular:step}

We are now at stage $\alpha$ of the induction, and we consider the element $a = o(\alpha)$:
\begin{itemize}
  \item
    If $a$ reaches $S$,
    then the \emph{cofinality} invariant already holds for $\beta = \alpha$, and no new joins are required.
    We continue with $\alpha+1$.
  \item
    Otherwise, $a$ does not reach $S$ and hence can escape $S$ by Lemma~\ref{lem:escape:or:reach};
    we construct a \emph{double fan} for $a$ (see below).
    Afterwards, we extend
    \begin{itemize}
      \item $H$ with $\hub{a}$,
      \item $F$ with all branch end points $\set{ \branch{a}{\beta} \mid \beta < \lambda }$,
    \end{itemize}
    and then we execute the join closure procedure (described below).
\end{itemize}

\subsection{Double fan}\label{sec:singular:double:fan}

We now describe the construction of the \emph{double fan}.
By the design of the induction step (Section~\ref{sec:singular:step}),
this procedure is invoked only for points $a \in A$ that can escape the current subsystem $S$
(namely, points that do not reach $S$; see Lemma~\ref{lem:escape:or:reach}).

The \emph{double fan for $a$} (where $a$ can escape $S$) is constructed as follows:
\begin{enumerate}
  \item
    The \emph{hub} $\hub{a} = \junction{S}{\set{a}}{\mu_\alpha}$.
    So $a \steps \hub{a}$, $\trap{\hub{a}} > \mu_\alpha$, and $\hub{a}$ cannot reach $S$.
    The hub is well-defined since $a$ can escape $S$ by assumption.
  \item
    We first add $\hub{a}$ to $S$ as an isolated point (it is new, since it cannot reach $S$).
    The \emph{branches} of the double fan are then constructed by induction on $\beta < \lambda$:
    \begin{align*}
      \branch{a}{\beta} \;=\; \junction{S}{\set{\hub{a}}}{\max(\gamma(\beta),\mu_\alpha)} \;,
    \end{align*}
    and, in each step, we extend $\pair{S}{{\substep}}$ with the path $\pi_{\hub{a}}$ from the escape junction construction.
    We argue that the construction never gets stuck.
    At each step $\beta$, we only add finitely many points to $S$.
    Hence throughout the branch construction we have
    \begin{align}
      \size{S} \le \mu_\alpha + \lambda \cdot \aleph_0 = \mu_\alpha < \trap{\hub{a}} \label{eq:branch:construction:S}
    \end{align}
    Thus, $\hub{a}$ can escape $S$ at every step.
\end{enumerate}
After the completion of the double fan we have $\size S \le \mu_\alpha$ by~\eqref{eq:branch:construction:S},
and
\begin{align}
  \trap{\branch a\beta} > \max(\gamma(\beta),\mu_\alpha) \ge \size S
\end{align}
So there are no stillborn branches.

The double fan contains peaks \emph{at} its hub and these peaks must be joined (in a decreasing way).
For this purpose, the induction step (Section~\ref{sec:singular:step}) will invoke the join closure procedure (Section~\ref{sec:singular:closure}) 
to construct almost deterministic rewrite sequences joining these peaks.
We need to construct these joins \emph{eagerly}, directly after the construction of the double fan.
At a later point in the construction,
some of the branches might no longer be able to escape the growing subsystem.
The meeting points themselves must be joined as well, and so on.
For this purpose, in the induction step (Section~\ref{sec:singular:step}), after the construction of the double fan,
the set $F$ is extended with all the branch end points and
then the \emph{join closure} is invoked to recursively join all points in $F$.

\subsection{Join closure}\label{sec:singular:closure}

We are now describing the \emph{join closure} procedure.
This procedure is invoked whenever the subsystem is extended with a double fan.
The set $F$ contains all points that need to be joined, and, in particular,
it contains the branch end points of all double fans that have been constructed.

The join closure is recursive, we also need to join all pairs of joins, and so on.
To keep track of what we have already joined we use the partial function
\begin{align}
  \sjoins: \trees{A} \pmap A
\end{align}
At each stage of the induction, the domain of this function is a subset of $\trees{F}$.
The leaves of these trees are branch end points of double fans that we have constructed,
and the inner nodes represent joins of their children.

Throughout the construction, we will uphold the following invariants:
\begin{align}
  \node{t_1}{t_2} \in \dom{\sjoins}
  \implies \joins{t_1} \adet \joins{\node{t_1}{t_2}} \ladet \joins{t_2}
\end{align}
Put in words: the points represented by $t_1$ and $t_2$ admit almost deterministic rewrite sequences to
the point represented by $\node{t_1}{t_2}$.

The \emph{join closure} procedure is executed as follows.
We process the set $\trees{F}$ in a well-founded order that respects the size of the trees
(trees with fewer nodes come first).
For each $t \in \trees{F}$:
\begin{enumerate}
  \item
    If $t = \leaf{a}$ for $a \in F$ and $t \not\in \dom{\sjoins}$, then extend $\sjoins$ with $t \mapsto a$.
  \item
    If $t = \node{t_1}{t_2}$, $t_1, t_2 \in \dom{\sjoins}$, and $t \not\in \dom{\sjoins}$,
    then let
    \begin{align*}
      a = \joins{t_1} && b = \joins{t_2}
    \end{align*}
    If $a$ and $b$ are alive, then construct
    \begin{align}
      j_{a,b} = \junction{S}{\set{a,b}}{\max(\trap{a}, \trap{b})} \;, \label{eq:tower:join}
    \end{align}
    Here Lemma~\ref{lem:escape:junction} is applicable since
    \begin{itemize}
      \item $a$ and $b$ are alive and hence $\trap{a},\trap{b} > \mu_\alpha \ge \size{S}$, and
      \item $\max(\trap{a},\trap{b}) < \kappa$ by invariant~\ref{it:singular:sub:max}.
    \end{itemize}
    We extend $\pair{S}{{\substep}}$ with the paths $\pi_a$ and $\pi_b$ from the escape junction construction (Lemma~\ref{lem:escape:junction}),
    and extend $\sjoins$ with $t \mapsto j_{a,b}$.
    Observe that, by construction, we have $\trap{j_{a,b}} > \max(\trap{a},\trap{b})$, justifying invariant~\ref{it:singular:monotonicity}.
\end{enumerate}
It is important to note that the joins are constructed in such a way that trap degree of the \emph{parent} $\joins{\node{t_1}{t_2}}$
is strictly greater than the trap degrees of its \emph{children} $\joins{t_1}$ and $\joins{t_2}$.
This ensures that the parent is alive at least as long as its children.
As a consequence, whenever a child participates in another join, creating a new peak at the child,
the parent is still alive and the peak can immediately be closed at the parent.

The joins themselves are constructed as single fans, not double fans.
The double fans serve as the basis of the construction, and their branch endings are then joined using single fans.
The purpose of the double fans is the following.
Throughout the construction, every single point will eventually die (will no longer be alive).
This is because the subsystem $\pair{S}{{\substep}}$ grows towards size $\cfc{\step}$ 
while every point $d$ in the reduction graph of $C$ has trap degree $\trap{d}$ $< \cfc{\step}$.
Thus every single fan, every hub, and every branch of a double fan will eventually cease to participate in the join closure.
However, as the trap degrees of the branches of a double fan form a cofinal sequence with supremum $\cfc{\step}$, 
at every stage of the construction some of the branches survive.
These branches have trap degrees arbitrarily high $< \cfc{\step}$;
their participation in joins ensures that every point (participating in joins) has parents with 
trap degree arbitrarily high $< \cfc{\step}$.
This is important such that some of these parents keep participating in joins, to keep the entire subsystem connected.

\subsection{Properties of the subsystem}\label{sec:singular:properties}

From now on, $H$, $F$, $\sjoins$ and $\pair{S}{{\substep}}$ denote the completed construction, that is, the unions over all stages $\alpha < \kappa$.
By invariant~\ref{it:singular:determinism}, branching in $\pair{S}{{\substep}}$ occurs only at the points
\begin{align*}
  \branchpoints \;=&\; H \cup \branchpoints' & \branchpoints' &= \set{ \joins{t} \mid t \in \dom{\sjoins} } ,
\end{align*}
Apart from the \emph{size}, \emph{survival} and \emph{join closure} invariants, which are only used for specific stages, the invariants carry over to the completed construction, since each of them concerns only finitely many points, all of which are present from some stage on.
For the \emph{determinism} invariant and the almost deterministic sequences in the \emph{join structure} invariant, we additionally use that only points of $\branchpoints$ ever receive new steps:
every path added to $S$ is a path $\pi_e$ of Lemma~\ref{lem:escape:junction}, which meets $S$ only in its starting point,
and this starting point is a hub or a point $\joins{t}$.
Moreover, every hub, branch end point and join point is new when it is created (it cannot reach the then-current $S$),
so a point of $S_{\alpha+1} \setminus \branchpoints_{\alpha+1}$ never enters $\branchpoints$ later.

\begin{lemma}\label{lem:singular:subsystem}
  Let $\pair{A}{{\to}}$ be a connected and confluent ARS that is not almost regular and such that $\cfc{\to}$ is singular,
  and let $\pair{S}{{\substep}}$ be the subsystem constructed in Sections~\ref{sec:singular:induction}--\ref{sec:singular:closure}.
  Then the subsystem $\pair S \substep$ is cofinal in $\pair A \to$ and deterministic away from $\branchpoints$,
  and every point in $S \setminus \branchpoints$ $\substep$-reduces to a point in $\branchpoints'$.
  Moreover, any two points of $S$ have a common $\substep$-reduct in $S$; in particular, $\pair{S}{{\substep}}$ is connected and confluent.
\end{lemma}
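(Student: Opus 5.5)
Cofinality and determinism follow directly from the invariants; the substance is the common-reduct claim, which I will reduce to a statement about join points.

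\emph{Cofinality.} Every $a\in A$ reduces to some $o(\beta)\in C$, since $C$ is cofinal. By \ref{it:singular:cofinality}, $o(\beta)$ reduces to a point of $S_{\beta+1}\subseteq S$.

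\emph{Determinism away from $\branchpoints$ and reachability of $\branchpoints'$.} These follow from \ref{it:singular:determinism} at each stage, carried to the union. For the union we use the remark preceding the lemma: new steps are only ever attached at points of $\branchpoints$, and a point outside $\branchpoints_{\alpha+1}$ never enters $\branchpoints$ later. Hubs need a separate check. A hub is added as a point from which branch paths emanate. Each such path ends in a branch end point $\branch{a}{\beta}\in F$, and $\joins{\leaf{\branch{a}{\beta}}}=\branch{a}{\beta}\in\branchpoints'$, so every hub also reduces to $\branchpoints'$.

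\emph{Common reducts.} Given $x,y\in S$, first reduce each to a point of $\branchpoints'$. It then suffices to show that any two join points $\joins{t}$ and $\joins{t'}$ have a common $\substep$-reduct. Since $\joins{t_i}\adet\joins{\node{t_1}{t_2}}$ by \ref{it:singular:join}, each join point reduces to the join point of any tree containing it as a subtree, provided that tree lies in $\dom{\sjoins}$. So I aim to find $u,u'$ with $t$ a subtree of $u$, $t'$ a subtree of $u'$, both $\joins{u}$ and $\joins{u'}$ alive at a common stage $\alpha$ after both are created, and $u,u'\in\trees{F_{\alpha+1}}$. Then \ref{it:singular:closure} puts $\node{u}{u'}$ into $\dom{\sjoins}$, and $\joins{\node{u}{u'}}$ is the desired common reduct.

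The main obstacle is showing that such alive ancestors exist, because every individual point eventually dies. I will use the double fans.

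\textbf{Step 1.} Fix a stage $\alpha$ late enough that $t,t'\in\trees{F_{\alpha+1}}$. Take the double fan of some hub. Since the values $\gamma(\beta)$ are cofinal in $\kappa$, there is a branch end point $f=\branch{h}{\beta}$ with $\trap{f}>\mu_{\alpha'}$ for any prescribed $\alpha'$. Hence some branch remains alive at any given stage.

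\textbf{Step 2 (climbing).} Starting from $t$, let $s=t$ if $\joins{t}$ is alive at the stage in question, otherwise go back to the stage at which $\joins{t}$ was alive. Combine $\joins{s}$ with a long-lived branch end point $f$ of the same stage's fan: both are alive at the fan's creation stage, so \ref{it:singular:closure} forms $\node{s}{\leaf{f}}$. By \ref{it:singular:monotonicity}, $\trap{\joins{\node{s}{\leaf{f}}}}>\trap{f}$, so this join inherits the longevity of $f$. Iterating yields ancestors of $t$ alive at arbitrarily late stages.

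Two cases need care when matching these ancestors to a single stage. If $t$ and $t'$ belong to different double fans, I use a later double fan, which exists unless the construction stops adding fans. In that degenerate case, once no further double fan is created, $F$ stops growing, and the join closure at the last fan-creation stage already joins everything that was still alive. To handle all cases uniformly, I will do an induction on stages: for every $\alpha$, any two join points created by the end of stage $\alpha$ have a common reduct. The step from stage $\alpha$ is the climbing argument through the branch of the most recent double fan whose trap degree exceeds the degrees of everything involved. I expect the bookkeeping to confirm that such a branch was alive when the relevant closure ran to be the delicate part.

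Finally, connectedness and confluence follow from the common-reduct property: two elements with a common reduct are connected, and a peak in $S$ is joined inside $S$ by that common reduct.
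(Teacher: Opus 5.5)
Your ingredients are the paper's: cofinality from \ref{it:singular:cofinality}, determinism and reachability from \ref{it:singular:determinism} (hubs reducing to their branch end points), reducing the common-reduct claim to two join points $\joins{t},\joins{t'}$, and attaching a long-lived branch end point $f$ so that, by \ref{it:singular:monotonicity}, $\joins{\node{t}{\leaf{f}}}$ inherits the longevity of $f$. (Your claim that $\joins{t}$ reduces to $\joins{u}$ for every $u \in \dom{\sjoins}$ containing $t$ also needs $\dom{\sjoins}$ to be closed under subtrees. This holds, because the closure only adds $\node{t_1}{t_2}$ when $t_1,t_2 \in \dom{\sjoins}$.)

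However, you do not finish the argument, so there is a gap. You see an obstacle in ``matching these ancestors to a single stage''. You then propose an unspecified induction on stages through ``the most recent double fan'', and you leave the decisive check as unverified bookkeeping. The case split does not help either. The degenerate case where fans stop being created cannot occur, and the claim that the last closure joins everything still alive says nothing about points that have already died.

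The missing point is only the order of choices. First fix a stage $\alpha$ with $t,t' \in \dom{\sjoins_{\alpha+1}}$; only then choose $f$ with $\trap{f} > \mu_\alpha$. The paper takes $f$ from the double fan built at stage~$0$; one is built there, since $S_0 = \emptyset$. Its branch end points lie in $F$ from stage~$0$ on, and by cofinality of $\gamma$ one of them satisfies $\trap{f} > \mu_\alpha$. This single $f$ serves both $t$ and $t'$, and it is alive at every stage $\le \alpha$ because $\mu$ is monotone.

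Now let $\beta \le \alpha$ be the stage at which $t$ enters $\dom{\sjoins}$. Then $\joins{t}$ is alive at $\beta$ by \ref{it:singular:survival}, and so is $f$. Hence \ref{it:singular:closure} gives $u = \node{t}{\leaf{f}} \in \dom{\sjoins_{\beta+1}}$, and \ref{it:singular:monotonicity} gives $\trap{\joins{u}} > \trap{f} > \mu_\alpha$. The same argument applies to $u' = \node{t'}{\leaf{f}}$. Both $u,u'$ lie in $\dom{\sjoins_{\alpha+1}}$ and are alive at stage $\alpha$, so \ref{it:singular:closure} at stage $\alpha$ puts $\node{u}{u'}$ into $\dom{\sjoins}$. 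Then \ref{it:singular:join} yields the common reduct $\joins{\node{u}{u'}}$.

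Your variant, taking $f$ from the fan of $t$'s own creation stage, also works, since every element of $\dom{\sjoins}$ is created at a fan-creation stage. But it too works only once $f$ is chosen above $\mu_\alpha$ for the fixed target $\alpha$. No iteration, no later fan, and no induction on stages is needed.
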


\begin{proof}
  The cofinality follows since every $a \in A$ reduces to some $o(\alpha) \in C$,
  and $o(\alpha)$ reaches $S$ by invariant~\ref{it:singular:cofinality}.
  The determinism and reachability are precisely invariant~\ref{it:singular:determinism}.
  For the last claim, it suffices to join two points $\joins{t_x}$ and $\joins{t_y}$,
  since every point of $S$ reduces to such a point (a hub to its branch end points).
  Let $\alpha$ be a stage by which $t_x, t_y \in \dom{\sjoins}$.
  As $\gamma$ is cofinal in $\kappa$, the double fan of stage~$0$ has a branch end point $f$ with $\trap{f} > \mu_\alpha$,
  so $f$ is alive at all stages $\le \alpha$.
  By the survival, join closure and degree monotonicity invariants, $\joins{t_x}$ is joined with $f$
  (at the stage where $t_x$ is added) into a point of trap degree $> \mu_\alpha$, and likewise $\joins{t_y}$.
  These two points are alive at stage~$\alpha$, and are hence joined as well.
\end{proof}

\begin{lemma}\label{lem:singular:S:almost:deterministic}
  Let $\pair{A}{{\to}}$ be a connected and confluent ARS that is not almost regular and such that $\cfc{\to}$ is singular,
  and let $\pair{S}{{\substep}}$ be the subsystem constructed in Sections~\ref{sec:singular:induction}--\ref{sec:singular:closure}.
  Then the subsystem $\pair{S}{{\substep}}$ is locally almost deterministic.
\end{lemma}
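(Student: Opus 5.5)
We must show that every local peak $b \lsubstep x \substep c$ in $\pair{S}{{\substep}}$ with $b \ne c$ can be joined by two almost deterministic reductions. Since $b \ne c$, the source $x$ is non-deterministic. By invariant~\ref{it:singular:determinism}, carried over to the completed construction, $x \in \branchpoints = H \cup \branchpoints'$. Moreover, the only steps in $S$ are steps of the paths $\pi_e$ of Lemma~\ref{lem:escape:junction}. Each such path starts at a hub or a join point and is added either as a branch of a double fan or as one of the two legs of a join. Beyond its first node, every such path consists of fresh, deterministic nodes, and it ends in a point of $\branchpoints'$. So the step $x \substep b$ is the first step of a path $x \substep^* y$ that is deterministic after $x$. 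Likewise $x \substep c$ starts a deterministic path $x \substep^* z$. Both $y$ and $z$ are of the form $\joins{t_y}$ and $\joins{t_z}$. It therefore suffices to find a common reduct $w$ with $y \substep^* w$ deterministic-after-the-first-step and similarly for $z$. More precisely, we need reductions $b \substep^* w$ and $c \substep^* w$ with at most one non-deterministic step each; the paths $b \substep^* y$ and $c \substep^* z$ already contribute none.

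I split into cases according to how the two paths were created.

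\emph{Case 1: $x$ is a hub.} Here $y = \branch{a}{\beta}$ and $z = \branch{a}{\beta'}$ are branch end points of the same double fan, created at stage $\alpha$. At that stage both are alive, because their trap degrees exceed $\mu_\alpha$. By the join closure invariant~\ref{it:singular:closure}, the tree $\node{\leaf{y}}{\leaf{z}}$ lies in the domain of $\sjoins$. The join structure invariant~\ref{it:singular:join} then gives $y \adet \joins{\node{\leaf y}{\leaf z}} \ladet z$.

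\emph{Case 2: $x = \joins{s}$ is a join point.} Each outgoing path from $x$ ends in some $\joins{t}$. By the third clause of invariant~\ref{it:singular:join} we have $t = \node{s}{s'}$ for some $s'$; here we use that the interior nodes are not join points. The two peak steps thus lead to $y = \joins{\node{s}{s_1}}$ and $z = \joins{\node{s}{s_2}}$. If these trees coincide, then $y = z$ and we are done. Otherwise, consider the later of the two creations, say that of $z$, at stage $\alpha$. At that moment $\joins{s}$ and $\joins{s_2}$ are alive. By degree monotonicity~\ref{it:singular:monotonicity}, $y$ has trap degree greater than that of $\joins{s}$, so $y$ is alive at the relevant stage. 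Likewise $z$ is alive, since a freshly created join point is alive by~\ref{it:singular:survival}. Hence the join closure~\ref{it:singular:closure} places $\node{\node{s}{s_1}}{\node{s}{s_2}}$ in the domain of $\sjoins$. Invariant~\ref{it:singular:join} then yields $y \adet w \ladet z$ for the corresponding join point $w$. Concatenating with the deterministic segments $b \substep^* y$ and $c \substep^* z$ gives the required almost deterministic joins.

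The main obstacle is the aliveness argument in Case 2. The difficulty is to ensure that both $y$ and $z$ are simultaneously alive at some stage after both exist. The point is exactly the design principle that the parent outlives its child. The child $\joins{s}$ was alive when it participated in the later join, and $\mu$ only grows across stages. So at the end of that stage both parents have trap degree exceeding the child's, which is greater than $\mu_\alpha$. One must check that the join closure at stage $\alpha$ processes the pair after both trees are in the domain of $\sjoins$. This holds because the pair tree is larger than either component, and the construction processes trees in order of size. If the later join was created by a join-closure call that does not enlarge $F$, I would instead appeal directly to the justification of~\ref{it:singular:closure} given earlier, which already covers this situation.
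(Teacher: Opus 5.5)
Your proposal is correct and follows essentially the same route as the paper's proof. You take deterministic segments from $b$ and $c$ to the first join points $y,z$, then split on whether the peak source is a hub (both branch end points are alive at their creation stage by survival) or a join point $\joins{s}$ (aliveness at the later creation stage via degree monotonicity), and you close with the join closure and join structure invariants; the only differences are cosmetic: you read off the deterministic segments directly from the shape of the escape-junction paths rather than via Lemma~\ref{lem:singular:subsystem}, and you explicitly dispose of the degenerate case $y = z$.
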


\begin{proof}
  Let $a,b,c \in S$ such that $b \lsubstep a \substep c$ and $b \neq c$.
  By Lemma~\ref{lem:singular:subsystem}, there exist deterministic reductions
  $b \substep^* b'$ and $c \substep^* c'$ to some $b',c' \in \branchpoints'$ (zero steps if $b \in \branchpoints'$, respectively $c \in \branchpoints'$).
  Here $b, c \notin H$, since no step of $S$ ends in a hub: a hub is added to $S$ as an isolated point, and every path added afterwards is simple and meets the then-current $S$ only in its starting point.
  We choose these reductions of minimal length, so that $b'$ and $c'$ are the first points of $\branchpoints'$ on $a \substep b \substep^* b'$ and $a \substep c \substep^* c'$ after~$a$.
  By definition of $\branchpoints'$, we have $b' = \joins{t_b}$ and $c' = \joins{t_c}$ for some $t_b, t_c \in \dom{\sjoins}$.
  We show that $b'$ and $c'$ were alive at a common stage.
  Since $a$ is not deterministic, $a \in \branchpoints$ by invariant~\ref{it:singular:determinism}.
  Then $a$ is either a hub $H$ or a join point $\branchpoints'$:
  \begin{itemize}
    \item
      If $a \in H$,
      then $b$ and $c$ start two distinct branches of a double fan,
      and $b'$, $c'$ must be the branch end points of that fan: $b',c' \in F$.
      Both are created at the stage of this fan creation,
      and hence alive throughout this stage by invariant~\ref{it:singular:survival}.
    \item
      If $a = \joins{t_a}$ for some $t_a \in \dom{\sjoins}$,
      then
      \begin{align*}
        b' &= \joins{\node{t_a}{s_1}} & c' &= \joins{\node{t_a}{s_2}}
      \end{align*}
      for some $s_1, s_2$ by the third item of invariant~\ref{it:singular:join}, applied to the reductions $a \substep b \substep^* b'$ and $a \substep c \substep^* c'$.
      Each of these joins was formed while $a$ was alive (the closure joins only alive points).
      At the later of the two stages, say stage $\alpha$,
      $a$ is alive, and by degree monotonicity~\ref{it:singular:monotonicity}
      $\trap{b'},\trap{c'} > \trap{a} > \mu_\alpha$.
      Hence $b'$ and $c'$ are alive at stage~$\alpha$.
  \end{itemize}
  In either case $b' = \joins{t_b}$ and $c' = \joins{t_c}$ are alive at a common stage.
  So by the join closure invariant~\ref{it:singular:closure} $\node{t_b}{t_c} \in \dom{\sjoins}$,
  and by the join structure invariant~\ref{it:singular:join}
  \begin{align*}
    b' = \joins{t_b} \adet d \ladet \joins{t_c} = c'
    &&\text{where}&&
    d = \joins{\node{t_b}{t_c}}
  \end{align*}
  Composing the deterministic sequences $b \substep^* b'$ and $c \substep^* c'$
  with the almost deterministic $b' \adet d$ and $c' \adet d$ yields almost
  deterministic rewrite sequences $b \substep^* d$ and $c \substep^* d$.
  Hence the peak $b \lsubstep a \substep c$ is joined by almost deterministic
  rewrite sequences.
\end{proof}

\section{Results}\label{sec:singular:results}

In this section we assemble our completeness results from the constructions of the previous sections.
Since a labeling can be built independently on each connected component, it suffices to treat a
single connected, confluent ARS. The regular case was settled in Section~\ref{sec:regular}
(Theorem~\ref{thm:regular}); the next theorem disposes of the remaining singular case, via the
double-fan construction of Section~\ref{sec:fans}.

\begin{theorem}\label{thm:singular:almost}
  Every connected, confluent ARS $\pair{A}{{\to}}$ whose cofinality cardinal $\cfc{\to}$ is singular
  has a cofinal, connected and confluent, locally almost deterministic subsystem $\pair{S}{{\substep}}$,
  and is decreasing Church--Rosser over \mbox{$\pair{\set{0,1,2}}{<}$}.
\end{theorem}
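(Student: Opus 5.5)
The plan is a case split on whether $\pair{A}{{\to}}$ is almost regular, followed by the standard assembly via Lemma~\ref{lem:almost:deterministic} and Lemma~\ref{lem:extend}.

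\emph{Almost regular case.} Suppose first that $\pair{A}{{\to}}$ is almost regular, that is, every element reduces to a point of trap degree $\cfc{\to}$. Then Theorem~\ref{thm:regular} applies directly. It yields a cofinal, connected and confluent, globally almost deterministic subsystem. Since globally almost deterministic implies locally almost deterministic (a peak $b \lstep\cdot\step c$ is in particular a conversion $b \cstep^* c$), this subsystem has the required properties. The same theorem also gives $\DCR$ over $\pair{\set{0,1,2}}{<}$.

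\emph{Non-almost-regular case.} Otherwise, $\pair{A}{{\to}}$ is connected, confluent, not almost regular, and $\kappa = \cfc{\to}$ is singular. Singular cardinals are uncountable limit cardinals, so the construction of Sections~\ref{sec:singular:induction}--\ref{sec:singular:closure} is available. Lemma~\ref{lem:escape:junction} is applicable because $\kappa$ is uncountable and singular. Let $\pair{S}{{\substep}}$ be the resulting subsystem.
\begin{enumerate}
  \item By Lemma~\ref{lem:singular:subsystem}, $\pair{S}{{\substep}}$ is cofinal in $\pair{A}{{\to}}$, and any two points of $S$ have a common $\substep$-reduct. Hence it is connected and confluent.
  \item By Lemma~\ref{lem:singular:S:almost:deterministic}, it is locally almost deterministic.
  \item Lemma~\ref{lem:almost:deterministic} then gives that $\pair{S}{{\substep}}$ is $\DCR$ over $\pair{\set{0,1}}{<}$.
  \item Lemma~\ref{lem:extend}, applied with $\alpha = 2$, lifts this to $\DCR$ over $\pair{\set{0,1,2}}{<}$ for the whole of $\pair{A}{{\to}}$. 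Its hypotheses are met: $\pair{S}{{\substep}}$ is connected and confluent, and $S$ is cofinal.
\end{enumerate}

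\emph{Main obstacle.} The substance lies in the earlier lemmas. The one point I would re-check is that the construction is well defined, meaning it never gets stuck. Concretely, at every stage $\alpha < \kappa$, the size bound $\size{S} \le \mu_\alpha < \kappa$ must hold. This requires $\lambda = \cf{\kappa} < \kappa$, which holds by singularity. The bound is what ensures that alive points can still escape. Given that, the theorem follows by the assembly above.

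A degenerate subcase also deserves a remark. If $\cfc{\to}$ were countable, Lemma~\ref{lem:countable} would apply instead. This cannot happen here, however, since a singular cardinal is uncountable.
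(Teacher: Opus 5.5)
Your proposal is correct and follows the paper's proof essentially step for step: you delegate the almost regular case to Theorem~\ref{thm:regular}, and otherwise you assemble the subsystem of Sections~\ref{sec:singular:induction}--\ref{sec:singular:closure} via Lemmas~\ref{lem:singular:subsystem}, \ref{lem:singular:S:almost:deterministic}, \ref{lem:almost:deterministic} and~\ref{lem:extend} (with $\alpha = 2$). Your remark that a globally almost deterministic subsystem is in particular locally almost deterministic supplies the one small step the paper leaves implicit in the almost regular case.
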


\begin{proof}
  The case of almost regular $\pair{A}{{\to}}$ is covered by Theorem~\ref{thm:regular}.
  Let $\pair{S}{{\substep}}$ be the subsystem of $\pair{A}{{\to}}$ constructed in Sections~\ref{sec:singular:induction}--\ref{sec:singular:closure}.
  By Lemma~\ref{lem:singular:S:almost:deterministic}, $\pair{S}{{\substep}}$ is locally almost deterministic,
  and thus decreasing Church--Rosser over $\pair{\set{0,1}}{<}$ by Lemma~\ref{lem:almost:deterministic}.
  Moreover, $\pair{S}{{\substep}}$ is cofinal in $\pair{A}{{\to}}$ by Lemma~\ref{lem:singular:subsystem}, and connected and confluent by the same lemma.
  Hence $\pair{A}{{\to}}$ is decreasing Church--Rosser over $\pair{\set{0,1,2}}{<}$ by Lemma~\ref{lem:extend}.
\end{proof}

With both cases settled, we obtain our main result:
the decreasing diagrams technique is complete, with the optimal three labels and no cardinality restriction.
This resolves Problem~56 of the RTA List of Open Problems~\cite{rtaloop56}.

\begin{theorem}[Main theorem]\label{thm:main}
  Every confluent ARS $\pair{A}{{\to}}$ is decreasing Church--Rosser over $\pair{\set{0,1,2}}{<}$.
\end{theorem}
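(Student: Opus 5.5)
The plan is to reduce to connected components and then split on the cofinality cardinal of each component. Let $\pair{A}{{\to}}$ be confluent, and partition $A$ into its connected components $A_i$ ($i \in I$), each carrying the restriction ${\to_i}$ of $\to$. By Convention~\ref{conv:components}, each $\pair{A_i}{{\to_i}}$ is connected and confluent. Every local peak lies entirely inside one component. Hence, if each component admits a locally decreasing labeling over $\pair{\set{0,1,2}}{<}$, the union of these labelings ($\to_\alpha = \bigcup_i {\to_{i,\alpha}}$ for $\alpha \in \set{0,1,2}$) is a locally decreasing labeling of $\to$. The joining reductions of each peak are then taken within its component.

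For a single connected confluent component $\pair{A_i}{{\to_i}}$, I distinguish cases on $\kappa_i = \cfc{\to_i}$.
\begin{enumerate}
  \item If $\kappa_i$ is finite, then $\kappa_i = 1$ by Remark~\ref{rem:finite:cfc}. This case is covered by Lemma~\ref{lem:countable}, giving $\DCR$ over $\set{0,1} \subseteq \set{0,1,2}$.
  \item If $\kappa_i = \aleph_0$, Lemma~\ref{lem:countable} again applies.
  \item If $\kappa_i$ is uncountable and regular, Theorem~\ref{thm:regular} applies.
  \item If $\kappa_i$ is singular, Theorem~\ref{thm:singular:almost} applies.
\end{enumerate}
Every infinite cardinal is either regular or singular, so these cases are exhaustive. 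A labeling over two labels is in particular a labeling over three labels: the decreasing condition only compares labels that actually occur, and $\set{0,1}$ is an initial segment of $\set{0,1,2}$, so nothing else needs checking.

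The only point that requires care is the bookkeeping of the component decomposition. Uniting the labelings must not create cross-component peaks, and it does not, since there are none: if $c \lstep a \to b$, then $a$, $b$ and $c$ are all connected. Likewise, the joining sequences supplied inside a component use only steps of that component, which keep their labels in the united labeling. There is no substantive obstacle here, because all the difficulty is absorbed by Theorems~\ref{thm:regular} and~\ref{thm:singular:almost}. The one subtlety to verify is that these theorems require only connectedness and confluence of the component, possibly with $\cfc{\to} > 1$, and that the excluded trivial case is exactly handled by Lemma~\ref{lem:countable}.
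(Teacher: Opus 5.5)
Your proof is correct and follows the paper's argument. You reduce to connected components, handle countable cofinality with Lemma~\ref{lem:countable}, and handle the uncountable cases with Theorems~\ref{thm:regular} and~\ref{thm:singular:almost}. The only cosmetic difference is that the paper splits the uncountable case into almost regular versus non-almost-regular (necessarily singular) components, whereas you split into regular versus singular; this is equally exhaustive, since Theorem~\ref{thm:singular:almost} is stated for every singular cofinality.
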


\begin{proof}
  It suffices to consider each connected component separately.
  For components with countable $\cfc{\to}$, this is Lemma~\ref{lem:countable}.
  For uncountable almost regular components, the claim follows from Theorem~\ref{thm:regular}.
  The remaining singular cases are covered by Theorem~\ref{thm:singular:almost}.
\end{proof}

\begin{corollary}\label{cor:cr:dcr} $\CR = \mathrm{DCR}$: the Church--Rosser and decreasing Church--Rosser properties are equivalent.
\end{corollary}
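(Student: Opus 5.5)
The corollary asserts both inclusions $\DCR \subseteq \CR$ and $\CR \subseteq \DCR$, and I will prove them separately.

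The direction $\DCR \implies \CR$ is van Oostrom's soundness result, Theorem~\ref{thm:decreasing:diagrams}. No further argument is needed: any ARS that admits a locally decreasing labeling over some well-founded partial order is confluent.

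For the converse, $\CR \implies \DCR$, I will take an arbitrary confluent ARS $\pair{A}{{\to}}$ and apply the main theorem, Theorem~\ref{thm:main}. This yields a locally decreasing labeling over $\pair{\set{0,1,2}}{<}$. That order is finite, hence well-founded, so the labeling witnesses that $\pair{A}{{\to}}$ is decreasing Church--Rosser in the sense of Definition~\ref{def:decreasing:diagrams}. In fact this gives the sharper statement $\CR = \DCR_3$.

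One formal point needs checking: whether $\DCR$ is meant as ``$\DCR$ over some well-founded partial order'' or over a fixed order. Under the first reading, the inclusion $\DCR_3 \subseteq \DCR$ is immediate. Under the second, one passes to a larger well-founded order by embedding labels, and this preserves local decreasingness. The reason is that the conditions of Figure~\ref{fig:decreasing:diagram} only refer to the order among labels that actually occur, and an order embedding preserves $<$.

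There is no real obstacle in this corollary. All the difficulty lies in Theorem~\ref{thm:main}, that is, in the fan and double-fan constructions behind Theorems~\ref{thm:regular} and~\ref{thm:singular:almost}, together with Lemma~\ref{lem:countable}. These are assumed here.
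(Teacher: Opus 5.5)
Your proof is correct and follows the paper's own argument: the corollary comes straight from the main theorem (Theorem~\ref{thm:main}) for $\CR \implies \DCR$ and from van~Oostrom's soundness theorem (Theorem~\ref{thm:decreasing:diagrams}) for $\DCR \implies \CR$. Your side remark about embedding labels into a larger well-founded order does no harm, but you don't need it, since $\DCR$ without qualification already means decreasing over \emph{some} well-founded partial order (Definition~\ref{def:decreasing:diagrams}).
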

\begin{proof}
  Immediate from Theorem~\ref{thm:main} and van~Oostrom's Theorem~\ref{thm:decreasing:diagrams}.
\end{proof}
\noindent
This is arguably the most compact form of our result: decreasing diagrams forfeit no power as a confluence criterion, since every confluent system is provably confluent by this single method.

At the cardinality of greatest practical interest --- and without any set-theoretic assumption --- we obtain:
\begin{corollary}\label{cor:dcr:complete:reals} The decreasing diagrams method is complete for proving confluence of ARS on real numbers (regardless of whether the Continuum Hypothesis is assumed).
\end{corollary}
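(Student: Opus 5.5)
This corollary is a direct specialization of Theorem~\ref{thm:main}, so the plan is to interpret the statement precisely and then invoke the main theorem. By an ARS ``on real numbers'' we understand any ARS $\pair{A}{{\to}}$ whose carrier $A$ is $\mathbb{R}$, or a subset of it, or more generally any set of cardinality at most $2^{\aleph_0}$ (e.g.\ states built from finitely or countably many reals, as in Examples~\ref{ex:finite:reals} and~\ref{ex:sensor:logger}). Completeness means: whenever such a system is confluent, it admits a locally decreasing labeling. Together with soundness (Theorem~\ref{thm:decreasing:diagrams}), this makes decreasing diagrams an exact characterization of confluence for these systems.

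The key step is to apply Theorem~\ref{thm:main} to a given confluent ARS on the reals. This yields a locally decreasing labeling over $\pair{\set{0,1,2}}{<}$ with no further work. In fact, the cardinality restriction plays no role at all: Theorem~\ref{thm:main} holds for ARSs of every cardinality.

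The only point needing care, and the one I expect to be the substantive content of the corollary, is the parenthetical claim that this holds regardless of the Continuum Hypothesis. For this I would recall how the proof of Theorem~\ref{thm:main} proceeds: component by component, splitting into countable cofinality (Lemma~\ref{lem:countable}), uncountable almost regular (Theorem~\ref{thm:regular}), and the remaining singular case (Theorem~\ref{thm:singular:almost}). None of these steps assumes anything about the value of $2^{\aleph_0}$, and the whole development is carried out in ZFC, as noted in the formalization remarks. Consequently the corollary holds in every model of ZFC, whether $2^{\aleph_0} = \aleph_1$ or not, and in particular even when $2^{\aleph_0}$ is singular. The contrast with the earlier $\aleph_1$ result~\cite{ivanov:2025} is that, without CH, the reals may have cardinality above $\aleph_1$, which that result does not cover.
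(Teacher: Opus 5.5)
Your proposal is correct and matches the paper, which gives no separate proof: the corollary is an immediate specialization of Theorem~\ref{thm:main}, whose proof is carried out in ZFC with no assumption on the value of $2^{\aleph_0}$, so it holds whether or not the Continuum Hypothesis holds. You do not need to inspect the case split for uses of CH, since any theorem proved in ZFC alone is unaffected by CH.
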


As a consequence of our main theorem,
van~Oostrom's decreasing diagrams~\cite{oost:1994b,oost:1994} as well as 
de~Bruijn's weak diamond property~\cite{brui:1978} are \emph{complete} for confluence
(the equivalence is due to Endrullis and Klop~\cite{endr:klop:2013}).
The constructions and proofs of this paper have been formalized in Isabelle/HOL as well as in Lean
(both formalizations are archived on Zenodo~\cite{endr:ivan:raam:2026:zenodo}).

\section{Conclusion and future work}\label{sec:conclusion}

We have proved that van~Oostrom's decreasing diagrams method with three labels is complete for proving confluence, 
without any cardinality restriction, and we have obtained two independent machine-checked formalizations (in Lean and Isabelle/HOL) that confirm this.

Termination and confluence are the two central correctness properties of programs and rewrite systems: 
the former guarantees that every computation eventually halts, the latter that its outcome is unique. 
For termination, criteria that are complete in principle have long been available: 
a term rewriting system terminates if and only if it is compatible with a well-founded monotone algebra~\cite{zant:1994},
and the dependency-pair framework~\cite{arts:gies:2000,gies:thie:schn:falk:2006} provides a complete characterization as well.
For confluence, by contrast, no comparably complete method was known. Our result closes this gap: confluence is now on the same footing as termination, certified in principle by a single, local criterion.

At its core, the proof establishes the existence of a (globally or locally) almost deterministic, connected, cofinal sub-ARS 
--- equivalently, a rewrite strategy --- in every confluent ARS of arbitrarily high cardinality, 
and uses this sub-ARS to construct a locally decreasing labeling of the rewrite steps.
That such a proof can be carried out in ZFC without additional axioms is perhaps surprising, for two reasons:
\begin{itemize}
  \item
    in 1994 van~Oostrom formulated a conjecture~\cite[Conjecture 2.3.31]{oost:1994} on the incompleteness of decreasing diagrams; 
    our result shows that this conjecture cannot hold in ZFC;
  \item
    confluent ARS are closely related to directed sets, 
    yet it is known~\cite{todo:1985} that (barring inconsistency) ZFC 
    does not fix the number of cofinal types of directed sets already at the smallest uncountable cardinality --- 
    so one might have expected the cofinal behavior of confluent ARS to be undetermined in ZFC as well.
\end{itemize}

The generality of our treatment pays off concretely for the case of greatest practical interest: 
because completeness of decreasing diagrams holds at every cardinality, it holds in particular for systems involving the \emph{real numbers}, 
and it does so unconditionally --- independent of the Continuum Hypothesis (Corollary~\ref{cor:dcr:complete:reals}).

We propose the following directions for future work:
\begin{enumerate}
  \item
    investigate the dependency of the completeness proof on the Axiom of Choice;
  \item
    investigate under what conditions a decreasing labeling is \emph{computable}, or exists in a constructive sense
    (our proof is non-constructive and relies on the Axiom of Choice; we claim no effective content for the labelings it yields);
    for restricted classes such as left-linear term rewriting systems, decreasing labelings can already be
    constructed automatically via rule labeling and relative termination~\cite{hiro:midd:2011,aoto:2010,nage:felg:zank:2017};
  \item
    complementing the previous point, develop fruitful \emph{heuristics} for applying decreasing diagrams
    (and the weak diamond property) to \emph{real-world programs}, extending both the confluence analysis of
    actor programs~\cite{henrio:etal:2026} and the rule-labeling techniques and tools for term rewriting
    systems~\cite{hiro:midd:2011,aoto:2010,nage:felg:zank:2017,zank:felg:midd:2011:tool};
    since our result guarantees that a locally decreasing labeling always exists, the central practical
    challenge becomes finding a good one;
  \item
    identify classes of (hybrid) models for which confluence guarantees \emph{stability}, as it does for the bouncing ball of Remark~\ref{rem:uncountable:ars};
    confluence is related to, but coarser than, stability, since it ignores the topology on the states;
  \item
    develop \emph{modular} techniques for constructing decreasing labelings ---
    building a locally decreasing labeling of a system compositionally from labelings of its parts;
    the decreasing-diagrams proof of the modularity of confluence~\cite{klop:oost:vrij:2000} is one such instance,
    and a general compositional theory would be valuable;
  \item
    generalize results in rewriting theory that rely on Klop's cofinality property
    (e.g., modularity of confluence~\cite{toya:1987}, proved via cofinal derivations in~\cite{jou:liu:2012})
    to the setting of almost deterministic strategies; see~\cite{ivanov:2026a} for related concepts;
  \item
    explore the practical implications of completeness for ARS on the real numbers, 
    in particular for formal methods for hybrid systems and cyber-physical systems~\cite{plat:2010};
  \item
    unlike for confluence, van~Oostrom's extension of decreasing diagrams to commutation~\cite{oost:1994} 
    is \emph{not} complete for proving commutation (see~\cite{endr:klop:2013,endr:klop:over:2020}); 
    it would be interesting to find a generalization that is complete for proving commutation.
\end{enumerate}

\bibliographystyle{acmart-primary/ACM-Reference-Format}
\bibliography{main}

\end{document}